\documentclass[letterpaper]{article}
\usepackage{aaai2027}
\nocopyright

\usepackage[hyphens]{url} 
\usepackage{graphicx} 
\usepackage{natbib} 
\usepackage{caption} 
\usepackage[T1]{fontenc}
\usepackage{graphicx}
\usepackage{amsthm}
\usepackage{amsmath,amssymb,amsfonts}
\usepackage{bm}
\usepackage{stmaryrd}
\usepackage{xspace}
\usepackage{cancel}
\usepackage[T1]{fontenc}
\usepackage{multirow}%
\usepackage{mathrsfs}%
\usepackage[title]{appendix}%
\usepackage{textcomp}%
\usepackage{manyfoot}%
\usepackage{booktabs}%
\usepackage{algorithm}%
\usepackage{algpseudocode}%
\usepackage{listings}%
\usepackage{cite}
\usepackage{lineno}
\usepackage{array}
\usepackage{amsmath,amssymb,amsfonts}
\usepackage{graphicx}
\usepackage{textcomp}
\usepackage{xcolor}
\usepackage{bm}
\usepackage{stmaryrd}
\usepackage{xspace}
\usepackage{tikz}
\usepackage{xcolor}
\usepackage{pifont}
\usetikzlibrary{positioning}
\usepackage{algorithm}
\usepackage{algpseudocode}

\usepackage[most]{tcolorbox}
\usepackage{xcolor}
\usepackage{amsmath, amssymb}

\definecolor{problemblue}{RGB}{232,244,255}
\definecolor{problemframe}{RGB}{60,120,180}

\newcommand{\shortneg}{\scalebox{0.6}[1]{$-$}}

\newtcolorbox{problembox}[1][]{
    colback=problemblue,
    colframe=problemframe,
    coltitle=black,
    title=#1,
    boxrule=0.6pt,
    arc=2pt,
    left=4pt,
    right=4pt,
    top=4pt,
    bottom=4pt,
    width=\linewidth,
    before skip=8pt,
    after skip=8pt,
    breakable
}

\theoremstyle{definition} 
\newtheorem{definition}{Definition}

\newtheorem{proposition}{Proposition}

\theoremstyle{problem}
\newtheorem{subproblem}{Subproblem}

\theoremstyle{remark}

\title{$S^3$: A \underline{S}mooth \underline{S}imulation \underline{S}urrogate for \\ Optimizing Discrete Abstractions of Dynamical Systems}

\author{
    Jordan Peper,
    James Mathias Gast,
    Vignesh Nanduri, \\
    Tanmayee Maram,
    Ethan Howes,
    Shivani Chandrasekar,
    Ivan Ruchkin
    \\
}
\affiliations{
    University of Florida\\

    \{jpeper, jgast, vignesh.nanduri, tanmayeemaram, ethan.howes, shivani.chandras\}@ufl.edu, iruchkin@ece.ufl.edu
}

\begin{document}

\maketitle

\begin{abstract}
Intelligent systems are increasingly deployed in safety-critical settings with black-box controllers, including neural networks. The properties and behaviors of these end-to-end systems can be studied with abstraction-based methods that replace them with simpler finite models. Constructing such abstractions requires balancing the \emph{soundness} of over-approximating the dynamical system against \emph{conservatism}, which manifests as spurious or excessive nondeterministic behaviors. Bi-simulation theory provides principled metrics for characterizing these relationships, but does not prescribe how to construct sound abstractions with minimal conservatism. We fill this gap with a \emph{smooth simulation surrogate} ($S^3$) --- a differentiable objective that approximates the \emph{reverse simulation metric} used to quantify conservatism. Combined with Taylor model-based reachability, $S^3$ enables gradient-based optimization of abstraction parameters while preserving soundness by construction. We evaluate this optimization pipeline on three case studies. Our results show that $S^3$ is strongly correlated with the reverse simulation metric, is computationally faster, and serves as an effective objective for reducing abstraction conservatism. \textbf{Code:} https://github.com/Trustworthy-Engineered-Autonomy-Lab/s3-abstraction-optimization


\end{abstract}

\section{Introduction}

\looseness=-1
Abstraction-based methods are widely used to study complex dynamical systems indirectly through simpler abstract models~\cite{baier_principles_2008, alur_formal_2011, kloetzer_fully_2008, belta_formal_2017, tabuada_verification_2009}. These methods reduce a potentially infinite-state dynamical system to a finite-state representation by (1) partitioning the continuous state space into a collection of \emph{abstract states} and (2) constructing \emph{transitions} between those states that capture the behavior of the underlying dynamical system. When constructed appropriately, the abstract model will capture the reachable sets of the concrete system. This enables tractable yet sound analysis of semantically rich properties, such as whether a system repeatedly visits locations in a prescribed order, or synthesizing a controller that provably guarantees it.

It is often essential that an abstract model \emph{over-approximates} the dynamical system (referred to as \emph{soundness}), particularly in safety-critical applications. However, enforcing soundness can also introduce excessive conservatism into the abstraction. Overly conservative abstractions may exhibit substantial nondeterminism and spurious transitions not realizable by the underlying dynamical system. As a result, analysis of such models may yield pessimistic conclusions: properties that hold for the dynamical system may nevertheless be impossible to certify using the abstraction. Increasing the granularity of the state-space partition can reduce conservatism, but doing so also enlarges the abstract model and increases the computational costs~\cite{soudjani_adaptive_nodate, fisher_approximating_2008, clarke_grumberg_jha_lu_veith_2000, clarke_lu_grumberg_jha_veith_2003, clarke_fehenker_han_krogh_2003}. Hence, available computational resources impose a practical limit on the number of abstract states and transitions. Thus, a long-standing design question is: How should finite abstract states be distributed across the state space to reduce conservatism while preserving soundness?

Typical methods balance soundness and conservatism in the context of an unknown, to-be-synthesized controller using uniform rectilinear partitioning of the continuous state space. 
Tools such as SCOTS~\shortcite{runggerSCOTSToolSynthesis2016} over-approximate reachable sets via a growth bound, while PESSOA~\shortcite{mazo_pessoa_2010} and CoSyMA~\shortcite{mouelhi_cosyma_2013} govern bi-simulation error primarily through partition granularity. Other approaches optimize cell geometry, for example by selecting aspect ratios that reduce average transition count under a fixed cell volume~\cite{weber_optimized_2017, ajeleye_data-driven_2023}. 
To our knowledge, no existing method exploits a known controller to directly optimize conservatism while adhering to soundness and a fixed computation budget. 

This paper aims to reduce the conservatism of abstract models of closed-loop dynamical systems, while preserving soundness under a limited computational budget, by optimizing the \emph{parameters} of the state quantization. We ground our investigation in a principled framework for characterizing soundness and conservatism --- \emph{(bi)simulation theory}~\cite{park_concurrency_1981, milner1989communication}. One system \emph{simulates} another if each transition of the former can be matched by a corresponding transition of the latter (i.e., a \emph{simulation relation} exists). When exact simulation cannot be established, as is often the case when one system is discrete and the other is continuous, one system \emph{approximately simulates} another if the distance between their states, or outputs, remains within a prescribed bound before and after each simulation step~\cite{girardApproximateBisimulationBridge2011, pola_approximately_2008}. This bound, called the \emph{simulation metric}, quantifies either an abstract model's \emph{soundness} (forward simulation) or its \emph{over-conservatism} (reverse simulation).

\looseness=-1
Unfortunately, these simulation metrics are poorly suited as optimization objectives: they are computationally expensive to evaluate, non-differentiable with respect to the quantization parameters, and characterize only the worst-case discrepancy (rather than the aggregate). To overcome these limitations, we derive a computationally efficient and \emph{smooth simulation surrogate} ($S^3$), which quantifies over-conservatism. Starting from the Bellman equation associated with the reverse simulation metric, we construct a strict upper-bounding value function and a smooth finite-horizon approximation that closely predicts the metric. Combined with sound-by-construction transitions based on Taylor models and bounded remainders~\cite{chen2015reachability}, this proxy enables gradient-based optimization of the abstraction's quantization parameters. To our knowledge, this is the first differentiable proxy of an upper-bound on the reverse simulation metric for a discrete abstractions of a dynamical system.

The contributions of this paper are threefold:
\begin{enumerate}
    \item A differentiable and computationally efficient proxy measure of abstraction conservatism.
    \item A gradient-based optimization pipeline to construct minimally conservative abstractions by minimizing this proxy.
    \item Case studies showing strong empirical correlation and optimization utility to improve the real simulation metric.
\end{enumerate}

\subsubsection{Related Work.} Abstraction-based methods are widely used to study properties of complex dynamical systems indirectly through simpler abstract models. Typical state abstraction schemes include \textit{phase-based}~\cite{frehse_phaver_2008, frehse_spaceex_2011, sloth_algorithmic_2011}, where rectangular or linear flow constraints are placed on the system states and dynamics, and \textit{predicate-based}~\cite{alur_counterexample-guided_2006, alur_predicate_2006, kloetzer_fully_2008}, where continuous state spaces are abstracted as predicate-defined subsets. State-of-the-art tools for abstracting nonlinear system dynamics include $\mathrm{Flow}^*$~\cite{hutchison_flow_2013} which leverages Taylor models to over-approximate reachable sets. When dynamics are not known, \cite{coppola_data-driven_2023, devonportSymbolicAbstractionsData2021a, lavaeiDataDrivenSynthesisSymbolic2022} sample trajectories from the system and synthesize probabilistic guarantees on model correctness.

Furthermore, a growing body of work incorporates optimization into the abstraction pipeline. One line of research focuses on \emph{state aggregation}, learning partitions of the state space that preserve the Markov property and improve sampling efficiency, though typically without soundness guarantees \cite{duan_state_2019, allenLearningMarkovState2024}. For systems with bounded disturbances, \cite{weber_optimized_2017, ajeleye_data-driven_2023} find rectilinear aspect ratios that minimize average transition count. Several methods also learn the parameters of the transition system itself or alternating bisimulation functions via robust optimization \cite{lavaeiDataDrivenSynthesisSymbolic2022}) and may be paired with PAC-style guarantees. Finally, for verification tasks, counterexample-guided abstraction refines spurious trajectories that falsely violate a specification~\cite{dierks_automatic_2007, goos_verification_2003, clarke_abstraction_2003, fehnker_refining_2005, alur_counterexample-guided_2006}.

The rest of this paper is organized as follows. First, we introduce preliminaries and state the problem. Then, we guarantee forward simulation via conservative reachability. Next, we prove an upper bound on the simulation Bellman function, and derive a smooth approximation of this upper bound. We discuss the end-to-end optimization pipeline, and then deploy this pipeline to three case studies and compare it to two baseline strategies for reducing conservatism. 

\section{Preliminaries and Problem}
\label{sec:preliminary}


Let $\mathbb{R}^n$ and $\mathbb{N}^n$ denote the $n$-dimensional set of real numbers and natural numbers (excluding 0), respectively. For sets $A$ and $B$, notation $A \cap B$ and $A \cup B$ denote intersection and union, respectively. $A \subset B$ and $A \subseteq B$ indicate $A$ is a proper and improper subset of $B$, respectively. The Cartesian product between sets $A$ and $B$ is denoted $A \times B$. The Minkowski sum between sets $A$ and $B$ is written $A \oplus B = \{a + b \mid a \in A, b \in B\}$. We denote the powerset of $A$ by $2^A$. Symbols $\forall$ and $\exists$ denote the ``for all'' and ``exists'' quantifiers. An $n$-dimensional axis-aligned bounding box (AABB) is defined by $[\boldsymbol{\ell}, \boldsymbol{u}]$, where $\boldsymbol{\ell} \in \mathbb{R}^n$ is the minimum corner and $\boldsymbol{u} \in \mathbb{R}^n$ is the maximum corner.

\subsection{Transition systems}

We introduce several relevant definitions necessary to understand the problem at hand. First, we assume that a discrete-time dynamical system is an infinite-state transition system.

\begin{definition} [Infinite-state transition system]
\label{def:ists}
    An infinite-state transition system is a tuple $s = (X, X_0, f)$, where:
    \begin{itemize}
        \item $X \subset \mathbb{R}^n$ is an $n$-dimensional state space
        \item $X_0 \subseteq X$ is a subset of initial states
        \item $f : X \rightarrow X$ is the dynamical transition function
    \end{itemize}
\end{definition}

We assume that the dynamical equations of motion $f$ are known, deterministic, and at least twice-differentiable on $X$. We denote the image of $A \subseteq X$ under $f$ as $f(A) = \{f(x) \mid x \in A\}$. Furthermore, $f^k$ denotes the $k$-fold iterate of $f$.

\begin{definition} [Finite-state transition system]
\label{def:fsts}
    A finite-state transition system is a tuple $\hat{s} = (\hat{X}, \hat{X}_0, \hat{f})$, where:
    \begin{itemize}
        \item $\hat{X} \subset \mathbb{N}^n$ is an $n$-dimensional abstract state set
        \item $\hat{X}_0 \subseteq \hat{X}$ is a subset of initial abstract states
        \item $\hat{f} : \hat{X}  \rightarrow 2^{\hat{X}}$ is the transition function
    \end{itemize}
\end{definition}

This system is \emph{nondeterministic}: each state $\hat{x}$ may admit multiple valid successors $\hat{f}(\hat{x}) \subseteq \hat{X}$. We denote the image of $\hat{A} \subseteq \hat{X}$ under $\hat{f}$ as $\hat{f}(\hat{A}) = \{\hat{f}(\hat{x}) \mid \hat{x} \in \hat{A}\}$. Furthermore, $\hat{f}^k$ denotes the $k$-fold iterate of $\hat{f}$.


\subsection{Sound abstractions}

Abstract modeling is a process whereby an infinite-state transition (concrete) system is modeled as a finite-state transition (abstract) system whose states represent subsets of the concrete state space and transitions reflect those of the concrete dynamics. State abstraction is performed by a \emph{state quantization} function $\psi_\theta : X \rightarrow \hat{X}$, where $\theta$ denotes the parameters of the quantizer from a parameter space $\Theta$. For any $\hat{x} \in \hat{X}$, define the quantization cell (preimage) as $\Psi_\theta(\hat{x})=\{x\in X \mid \psi_\theta(x) = \hat{x} \}$.

After the state space is abstracted into a finite collection of abstract states, transitions are built over the abstract state set. If the resulting finite-state transition system contains all concrete states, initial states, and one-step transitions under this quantization, we call it a \emph{sound abstract model}.

\begin{definition} [Sound abstract model]
\label{def:cons-abstraction}
    Let $s$ be an infinite-state transition system (Def.~\ref{def:ists}) and let $\hat{s}_\theta$ be a finite-state transition system (Def.~\ref{def:fsts}) related to $s$ through the a state quantization function $\psi_\theta$. The system $\hat{s}$ is a \emph{sound abstraction} of $s$ if the following conditions hold:
    \begin{enumerate}
        \item $X \subseteq \bigcup_{\hat{x} \in \hat{X}} \Psi_\theta(\hat{x})$ and $X_0 \subseteq \bigcup_{\hat{x}_0 \in \hat{X}_0} \Psi_\theta(\hat{x}_0)$
        \item $\forall \hat{x} \in \hat{X},f(\Psi_\theta(\hat{x})) \subseteq \bigcup_{\hat{x} \in \hat{f}(\hat{x})} \Psi_\theta(\hat{x})$
    \end{enumerate}
\end{definition}

Condition 1 ensures that every concrete state and initial state have one conservative abstract representative. Condition 2 ensures that every one-step transition of the concrete system is represented by a transition in the abstract model.

\subsection{Approximate simulation relations}

A classical way to study the relationship between two models is through approximate simulation relations. Define a distance metric between any concrete and abstract state:
\begin{equation}
    d(x, \hat{x}) = \inf_{z \in \Psi_\theta(\hat{x})} \|x-z\|_p, \quad p \ge 1
\end{equation}
Then, we define an approximate simulation relation of $s$ by $\hat{s}$, namely, a \emph{forward $\varepsilon$-approximate simulation relation}.

\begin{definition} [Forward $\varepsilon$-Approximate Simulation Relation]
\label{def:forward-sim}
    $R_\varepsilon \subseteq X \times \hat{X}$ is an $\varepsilon$-approximate simulation relation of $s$ by $\hat{s}$ if, for all $(x, \hat{x}) \in R_\varepsilon$:
    \begin{enumerate}
        \item $d(x, \hat{x}) \le \varepsilon$
        \item $\exists \hat{x}' \in \hat{f}(\hat{x}) : (f(x), \hat{x}') \in R_\varepsilon$
    \end{enumerate}
\end{definition}

If such a simulation relation exists, then $\hat{s}$ approximately simulates $s$ with at least precision $\varepsilon$, denoted $s \lesssim_{\varepsilon} \hat{s}$. When $\varepsilon = 0$, the simulation of $s$ by $\hat{s}$ is \emph{exact}, and $\hat{s}$ is sound with respect to $s$, denoted $s \lesssim \hat{s}$. Furthermore, we define the \emph{forward simulation metric} as:
\begin{equation}
    \sigma^{\rightarrow}(s, \hat{s}) = \inf\{ \varepsilon \ge 0 : \exists R_\varepsilon \;\text{satisfying Def~\ref{def:forward-sim}} \}.
\end{equation}
Conversely, we can define an approximate simulation of $\hat{s}$ by $s$, namely, a \emph{reverse $\delta$-approximate simulation relation}.



\begin{definition} [Reverse $\delta$-Approximate Simulation relation]
\label{def:reverse-sim}
    $R_\delta \subseteq X \times \hat{X}$ is a $\delta$-approximate simulation relation of $\hat{s}$ by $s$ if, for all $(x, \hat{x}) \in R_\delta$:
    \begin{enumerate}
        \item $d(\hat{x}, x) \le \delta$
        \item $\forall \hat{x}' \in \hat{f}(\hat{x}), (f(x), \hat{x}') \in R_\delta$

    \end{enumerate}
\end{definition}

We denote the approximate simulation of $\hat{s}$ by $s$ as $s \gtrsim_\delta \hat{s}$, and the exact simulation as $s \gtrsim \hat{s}$ when $\delta = 0$. Furthermore, the \emph{reverse simulation metric} is given by:
\begin{equation}
\label{eqn:reverse-sim-metric}
    \sigma^{\leftarrow}(s, \hat{s}) = \inf\{ \delta \ge 0 : \exists R_\delta \;\text{satisfying Def.~\ref{def:reverse-sim}} \}.
\end{equation}


\subsection{Problem statement}
\label{sec:problem}

Let $s = (X, X_0, f)$ be a dynamical system, and let $\psi_\theta$ be a parameterized quantization of $X$ that induces the abstract model $\hat{s}_\theta=(\hat{X},\hat{X}_0,\hat{f})$. The central problem in this paper is to minimize the reverse simulation metric subject to exact simulation of $s$ by $\hat{s}$ by optimizing $\theta$. Formally:
\begin{equation}
    \min_{\theta \in \Theta} \sigma^{\leftarrow}(s, \hat{s}_\theta) \quad \mathrm{s.t.} \quad \sigma^{\rightarrow}(s, \hat{s}_\theta) = 0.
\end{equation}
However, solving this is generally unattainable or computationally intractable for several reasons. First, the evaluation of $\sigma^{\leftarrow}$ and $\sigma^{\rightarrow}$ is computationally expensive, requiring both the compilation of $s$ and $\psi_\theta$ into $\hat{s}_\theta$ and a sub-optimization loop to tightly approximate the simulation metrics. Second, $\sigma^{\leftarrow}$ and $\sigma^{\rightarrow}$ are generally non-differentiable in $\Theta$, requiring sample-heavy gradient-free optimization strategies.

To this end, we devise three subproblems. First, we aim to devise an abstraction-building pipeline that certifies exact simulation of $s$ by $\hat{s}$ by construction:

\begin{subproblem}
    Given $s$, choose a model parameter space $\Theta$, a quantization function $\psi_\theta$, and an algorithm to construct $\hat{s}_\theta$ that ensures $s \lesssim \hat{s}_\theta$ for all $\theta \in \Theta$.
\end{subproblem}

Next, we aim to derive a differentiable proxy of the reverse simulation metric that is computable directly from the parameters and concrete system:

\begin{subproblem}
    Obtain a proxy measure $\tilde{\sigma}^{\leftarrow}(s, \theta) \approx \sigma^{\leftarrow}(s, \hat{s}_\theta)$ that is differentiable in $\Theta$.
\end{subproblem}

Finally, we aim to solve the optimization problem:
\begin{subproblem}
    Given a parameter search space $\Theta$ and a cost function $\tilde{\sigma}^{\leftarrow}(s, \theta)$, solve the optimization problem:
\begin{equation*}
    \min_{\theta \in \Theta} \tilde\sigma^{\leftarrow}(s, \theta) \quad \mathrm{s.t.} \quad \sigma^{\rightarrow}(s, \hat{s}_\theta) = 0
\end{equation*}
\end{subproblem}

\section{Exact Forward Simulation by Construction}
\label{sec:soundness}

In this section, we address the first subproblem of choosing a parameter space, a quantization function, and a model-building algorithm to guarantee exact forward simulation ($s \lesssim \hat{s}_\theta$). First, we formalize a rectilinear grid state quantization scheme. Then, following standard Taylor model-based reachability, we show that both forward simulation and soundness can be guaranteed by construction.


\subsection{Rectilinear state quantization}

Let $X = \prod_{i=1}^n[\underline{x}_i, \overline{x}_i]$ be a rectangular domain. For each dimension $i = 1, \dots, n$, we allocate a grid resolution $m_i$, such that $\hat{X} = \prod_{i=1}^n \{1, \dots, m_i \}$. Furthermore, we denote the quantization parameters as $\theta = \{\omega_i\}_{i=1}^n$, where $\omega_i \in \mathbb{R}^{m_i}$ are the gap weights. The exact interval gaps $\eta$ between neighboring hyperplanes are recovered from these weights through activation $\mathrm{Softplus}(\omega) = \ln(1+e^\omega)$, and then normalized so that they sum to the domain side length:
\begin{equation}
    \eta_{i,j} = (\overline{x}_i - \underline{x}_i)\frac{\mathrm{Softplus}(\omega_{i,j})}{\sum_{l=1}^{m_i} \mathrm{Softplus}(\omega_{i,l})}.
\end{equation}
The choice of $\mathrm{Softplus}$ maps unconstrained weights to positive interval widths and preserves their ordering.

The explicit quantization cell boundaries, illustrated in Figure~\ref{fig:quantization}, are defined by the cumulative sum over these gaps:
\begin{equation*}
    b_{i,0} = \underline{x}_i, \quad  b_{i,j} = \underline{x}_i + \sum_{l=1}^j \eta_{i,l}, \quad j = 1,\dots m_i.
\end{equation*}
Hence, the quantization of any state $x \in X$ is given by:
\begin{equation}
\label{eqn:explicit-quantization}
    \psi_\theta(x) = (j_1, \dots, j_n), \quad j_i = \min\{ j \mid x_i \le b_{i, j}\},
\end{equation}
and, the concretization of any $\hat{x} = (\hat{x}_1, \dots, \hat{x}_n)$ is given by:
\begin{equation}
\label{eqn:explicit-concretization}
    \Psi_\theta(\hat{x}) = \prod_{i=1}^n[b_{i,\hat{x}_i}, b_{i,\hat{x}_i+1}].
\end{equation}
This parameterization transforms a constrained problem of placing ordered boundaries within $[\underline{x}_i,\overline{x}_i]$ into an unconstrained one over $\Theta = \mathbb{R}^{\sum m_i}$. See the appendix for the quantization algorithm. 

\begin{figure}
    \centering
    \includegraphics[width=0.9\linewidth]{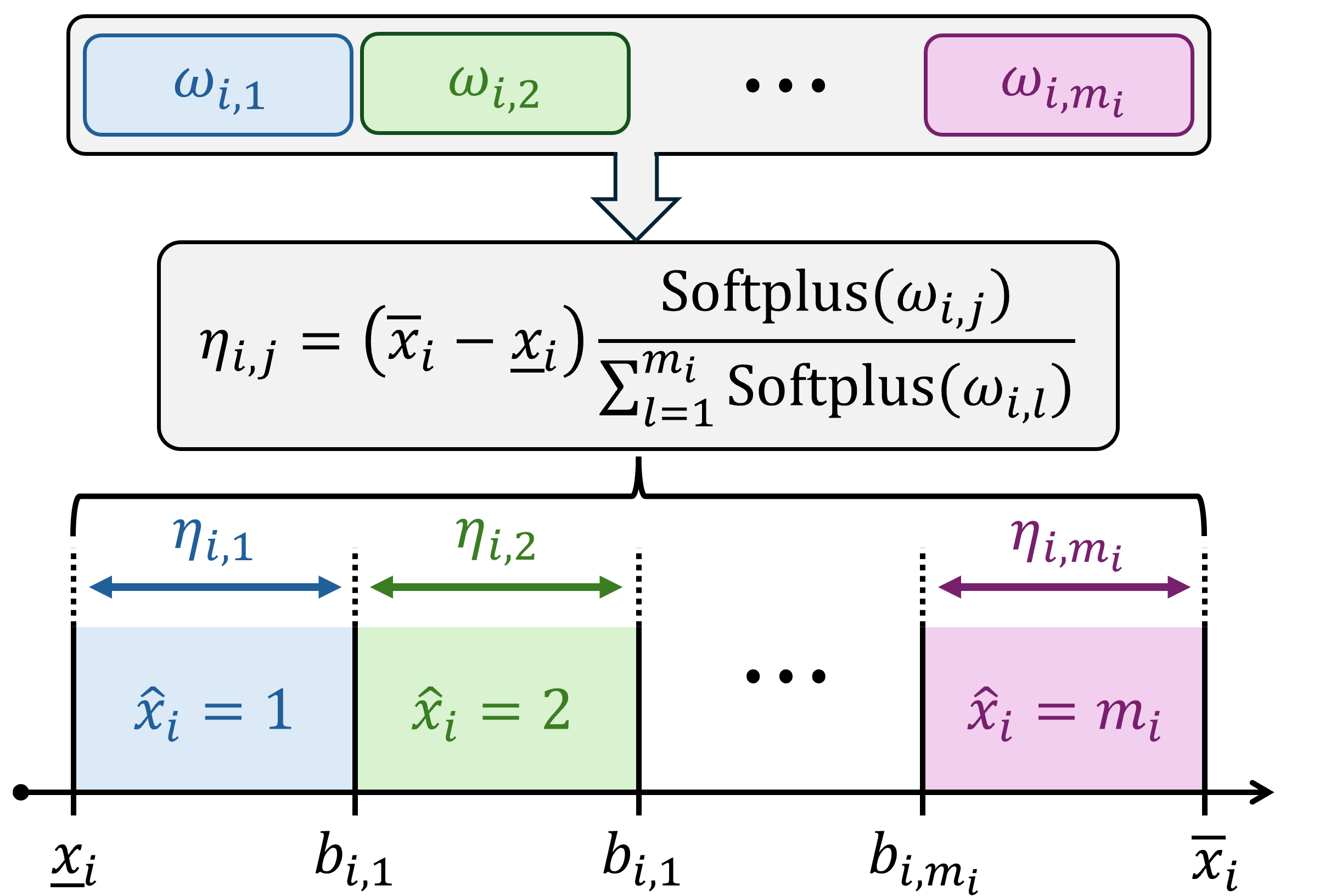}
    \caption{The quantization process forming variable-width cells from unconstrained parameters $\omega_i$ in $\theta$.}
    \label{fig:quantization}
\end{figure}

\subsection{Identifying successors through reachability}

Our goal is to synthesize, for each $\hat{x} \in \hat{X}$, a set of successors $\hat{f}(\hat{x})$ satisfying the ``behavior inclusion'' condition in Definition~\ref{def:cons-abstraction}:
\begin{equation*}
    \forall \hat{x} \in \hat{X},f(\Psi_\theta(\hat{x})) \subseteq \bigcup_{\hat{x} \in \hat{f}(\hat{x})} \Psi_\theta(\hat{x}).
\end{equation*}
To this end, we leverage first-order Taylor models --- commonly used in reachability analysis~\cite{chen2015reachability} --- and a known upper-bound on approximation error to over-approximate the one-step image of any axis-aligned bounding box (AABB) in $X$ under $f$. Then, we will utilize this formulation to make explicit the set of successors $\hat{f}(\hat{x})$.

Let $A=[\boldsymbol{\ell},\boldsymbol{u}]$ be any AABB in $X$, where $\boldsymbol{\ell}(\hat{x})$ and $\boldsymbol{u}(\hat{x})$ are its lower and upper corners. The centroid of $A$ is $c =(\boldsymbol{\ell} + \boldsymbol{u})/2$. The first-order Taylor model of $f$ at $c$ is
\begin{equation*}
F(x; c)=f(c)+\left.\frac{\partial f}{\partial x}\right|_{x=c}(x-c).
\end{equation*}
For each dimension $i=1,\dots,n$, the incurred, component-wise linearization error $e_i(x)=f_i(x)-F_i(x; c)$ is bounded by the corresponding Taylor remainder:
\begin{equation*}
|e_i(x)| \le \frac{1}{2} \max_{z \in A}\left\| \left.\frac{\partial^2f}{\partial x_i \partial x}\right|_{x=z} \right\| \| x - c\|^2.
\end{equation*}
Collecting these component-wise bounds gives the \textit{remainder vector} $\boldsymbol{e}(x) = [|e_1(x)|, \dots, |e_n(x)|]$. The resulting one-step reachable-set over-approximation of $f(A)$ is
\begin{equation}
\label{eqn:reachable-set}
\small
    \mathrm{Reach}(A) = \prod_{i=1}^n \left[  \inf_{z \in A } F_i(z;c ), \sup_{z \in A }F_i(z;c)  \right] \oplus [\shortneg \overline{\boldsymbol{e}}, \overline{\boldsymbol{e}}].
\end{equation}
The left-hand side of $\oplus$ is the AABB of $A$ under $F$, which can be determined explicitly since $F$ is linear. The remainder vector $\boldsymbol{e}(x)$ is maximized within $A$ at either of the corners, hence $\overline{\boldsymbol{e}} = \boldsymbol{e}(\boldsymbol{u}) = \boldsymbol{e}(\boldsymbol{\ell})$ denotes the worst-case approximation error. This guarantees that $f(A) \subseteq \mathrm{Reach}(A)$.

Using this reachable set over-approximation operator, we define the successors of any $\hat{x} \in \hat{X}$ as:
\begin{equation}
\label{eqn:post-set}
\hat{f}(\hat{x})=\left\{\hat{x}'\in\hat{X}\mid \Psi_\theta(\hat{x}')\cap\mathrm{Reach}(\Psi_\theta(\hat{x}))\neq\emptyset\right\},
\end{equation}
satisfying the behavior inclusion condition.

\begin{proposition}[Exact simulation by construction]
\label{prop:exact-simulation}
    Consider a dynamical system $s = (X, X_0, f)$, a parameter vector $\theta \in \Theta$, and a rectilinear state quantization function $\psi_\theta(x)$. Let the abstract state set be $\hat{X} = \psi_\theta(X)$, the initial abstract state set be $\hat{X}_0 = \psi_\theta(X_0)$, and let $\hat{f}(\hat{x})$ satisfy Equation~\ref{eqn:post-set} for all $\hat{x} \in \hat{X}$, constituting $\hat{s}_\theta = (\hat{X}, \hat{X}_0, \hat{f})$. Then, it follows that $\hat{s}_\theta$ simulates $s$ exactly, that is, $s \lesssim \hat{s}_\theta$.
\end{proposition}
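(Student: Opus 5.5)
The natural route is to exhibit an explicit relation and check that it satisfies Definition~\ref{def:forward-sim} with $\varepsilon = 0$. I would take
\begin{equation*}
R = \{(x,\hat{x}) \in X \times \hat{X} \mid x \in \Psi_\theta(\hat{x})\},
\end{equation*}
the graph of membership in quantization cells. Since the cells of Equation~\ref{eqn:explicit-concretization} are closed boxes, a boundary point may lie in two cells. I would therefore keep every such pair in $R$ rather than using only the pair $(x,\psi_\theta(x))$. This makes the successor step insensitive to which cell is chosen. Every $x \in X$ has at least one partner, namely $\psi_\theta(x)$: because $b_{i,0} = \underline{x}_i$ and $b_{i,m_i} = \overline{x}_i$, the cells cover $X$, and by definition $x \in \Psi_\theta(\psi_\theta(x))$. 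Every initial state likewise has a partner in $\hat{X}_0 = \psi_\theta(X_0)$. This is Condition~1 of Definition~\ref{def:cons-abstraction}.

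Condition~1 of Definition~\ref{def:forward-sim} is immediate. If $x \in \Psi_\theta(\hat{x})$, then taking $z = x$ in the infimum defining $d$ gives $d(x,\hat{x}) = 0 \le 0$.

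Condition~2 is the substantive step. Fix $(x,\hat{x}) \in R$ and set $A = \Psi_\theta(\hat{x})$, which is an AABB in $X$. I would use the stated guarantee $f(A) \subseteq \mathrm{Reach}(A)$ from Equation~\ref{eqn:reachable-set} to conclude $f(x) \in \mathrm{Reach}(A)$. Since $f : X \to X$, also $f(x) \in X$, so by the covering above $f(x) \in \Psi_\theta(\hat{x}')$ for $\hat{x}' = \psi_\theta(f(x))$. Then $f(x)$ lies in $\Psi_\theta(\hat{x}') \cap \mathrm{Reach}(A)$, so this intersection is nonempty, and Equation~\ref{eqn:post-set} gives $\hat{x}' \in \hat{f}(\hat{x})$. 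Moreover $(f(x),\hat{x}') \in R$ by construction. Hence $R$ is a $0$-approximate forward simulation relation, and $\sigma^{\rightarrow}(s,\hat{s}_\theta) = 0$, i.e.\ $s \lesssim \hat{s}_\theta$. The same argument, with $\hat{x}$ ranging over all of $\hat{X}$, yields Condition~2 of Definition~\ref{def:cons-abstraction}.

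The main obstacle is justifying $f(A) \subseteq \mathrm{Reach}(A)$ rigorously, rather than just citing it. I would argue componentwise. For $x \in A$,
\begin{equation*}
f_i(x) = F_i(x;c) + e_i(x).
\end{equation*}
The first term lies in $[\inf_A F_i, \sup_A F_i]$. For the second, Taylor's theorem with Lagrange remainder, applied along the segment from $c$ to $x$ (which stays in the convex box $A$), gives $|e_i(x)| \le \bar{e}_i$. This uses that $f$ is twice differentiable, that the Hessian norm bound is a maximum over $A$, and that $\|x-c\|$ is maximized at the corners of $A$. Together these place $f_i(x)$ in the $i$-th factor of the Minkowski sum. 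If the paper's remainder bound is only taken as given, this step reduces to quoting it, and the rest of the proof is purely set-theoretic.
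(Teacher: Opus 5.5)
Your proposal is correct and follows essentially the same route as the paper's proof: the paper also works with the (implicit) cell-membership relation, observes that $d(x,\hat{x})=0$ exactly on $\Psi_\theta(\hat{x})$, and uses $f(\Psi_\theta(\hat{x}))\subseteq\mathrm{Reach}(\Psi_\theta(\hat{x}))$ together with the intersection rule of Equation~\ref{eqn:post-set} to obtain a related successor. Your version is more explicit and more careful than the paper's. You name $R$ outright instead of appealing to ``induction,'' and you use $f(x)\in X$ to pick the covering cell, since $\mathrm{Reach}$ may leave $X$. You also pair initial states with abstract states and sketch the Taylor-remainder justification of $f(A)\subseteq\mathrm{Reach}(A)$, which the paper only asserts.
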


\noindent
See the appendix for the proof and model-building algorithm.

\section{Smooth Simulation Surrogate}
\label{sec:proxy}

In this section, we address Subproblem 2 and 3 by deriving a proxy measure of the reverse simulation metric $\sigma^\leftarrow$ --- namely, the smooth simulation surrogate ($S^3$) --- and then optimize the quantization parameters with this objective function. First, we review the Bellman-like simulation functions, and determine computable exact finite-horizon analogues. We derive an abstract value function that is used to compute an upper bound on the simulation metric on a finite time horizon and, ultimately, $S^3$. We wrap up by discussing the details of our optimization pipeline.

\subsection{Finite-horizon simulation functions}

Girard and Pappas~\shortcite{girardApproximateBisimulationBridge2011} introduce a Bellman-like \emph{simulation function} $V$ for computing the simulation metric, measuring the largest discrepancy witnessed along the trajectories of two systems. We adapt this equation to the present setting, where $s$ is deterministic:
\begin{align*}
    V(x, \hat{x}) &= \max \left\{ d(x, \hat{x}), \sup_{\hat{x}' \in \hat{f}(\hat{x})} V(f(x), \hat{x}')\right\} \\
    & = \sup_{k \ge 0,\;\hat{x}_{k+1} \in \hat{f}(\hat{x}_k)} d(f^k(x), \hat{x}_k),
\end{align*}
where $\hat{x}_0 = \hat{x}$. The reverse simulation metric of $\hat{s}$ by $s$ can be computed from this simulation function:
\begin{equation}
    \sigma^\leftarrow(s, \hat{s}) = \sup_{\hat{x} \in \hat{X}} \inf_{x \in X} V(x, \hat{x})
\end{equation}
However, evaluating $V$ over an infinite horizon is difficult. Lyapunov-like simulation functions can provide conservative bounds on $V$, and therefore on $\sigma^\leftarrow$, but constructing them may be challenging, especially for nonlinear systems~\cite{girardApproximateBisimulationBridge2011, pola_approximately_2008}.

For abstraction design, the discrepancies often relevant to a task often occur over a finite temporal window. This is particularly true for finite-time specifications, including co-safety tasks, whose satisfaction can be determined by a finite trajectory prefix. Therefore, a finite-horizon simulation function provides a measure of over-conservatism while avoiding the full infinite-horizon evaluation. We define these finite-horizon $H$ analogues of $V$ and $\sigma^\leftarrow(s, \hat{s})$:
\begin{equation}
\label{eqn:finite-bellman} 
    V_H(x, \hat{x}) = \sup_{0 \le k \le H, \,\,\hat{x}_{k+1} \in \hat{f}(\hat{x}_k)}d(f^k(x), \hat{x}_k),
\end{equation}
\begin{equation}
\label{eqn:simulation-from-bellman}
    \sigma_H^\leftarrow(s, \hat{s}) = \sup_{\hat{x} \in \hat{X}} \inf_{x \in X} V_H(x, \hat{x}).
\end{equation}
Since the set of admissible rollout prefixes expands monotonically with $H$, $V_H(x,\hat{x})$ is nondecreasing in $H$ and satisfies $\lim_{H \rightarrow \infty} V_H(x,\hat{x}) = V(x,\hat{x})$. Consequently, it also follows that $\lim_{H \rightarrow \infty} \sigma_H^\leftarrow(s,\hat{s}) = \sigma^\leftarrow(s,\hat{s})$.


\subsection{Smooth simulation surrogate ($S^3$)}

Equation~\ref{eqn:simulation-from-bellman} quantifies the worst-case finite-horizon discrepancy over initial state pairs $(x_0,\hat{x}_0)$, but is non-differentiable and does not provide an abstraction-wide objective for optimizing the quantization parameters. We therefore proceed in two stages. First, we derive an \emph{abstract value function} that assigns a discrepancy value to each abstract state and provably upper-bounds $\sigma_H^\leftarrow$. We then replace its non-differentiable reachable sets with smooth over-approximations and smooth the remaining suprema to obtain $S^3$.

For a fixed initial abstract state $\hat{x}_0$, denote its $k^{\text{th}}$ reachable set by $\hat{A}_k = \hat{f}^k(\hat{x}_0)$. Using the $\mathrm{Reach}$ operator, this recursion can be written analogously to Equation~\ref{eqn:post-set}:
\begin{equation}
\label{eqn:abs-reach-set}
    \hat{A}_{k+1} = \{\hat{x} \in \hat{X} \mid \Psi_\theta(\hat{x}) \cap\mathrm{Reach}(\Psi_\theta(\hat A_k)) \ne \emptyset \},
\end{equation}
where $\hat{A}_0=\{\hat{x}_0\}$, $k \ge 0$. Let $c_0$ be the centroid of $\Psi_\theta(\hat{A}_0)$, and let $\hat{c}_k$ and $\hat{r}_k$ denote the centroid and radius of $\Psi_\theta(\hat{A}_k)$, respectively. We define the \emph{abstract value function} by
\begin{equation}
\label{eqn:abs-value-fcn}
    \hat{W}_H(\hat{x}) = \sup_{0 \le k \le H} \{ \hat{r}_k + |f^k(c_0) - \hat{c}_k|\},
\end{equation}
which upper-bounds $\inf_{x \in X} V_H(x, \hat{x})$ in Equation~\ref{eqn:simulation-from-bellman}.

\begin{proposition}[Upper bound on $\sigma_H^\leftarrow$]
\label{prop:upper-bound}
Given $\hat{W}_H(\hat{x})$ per Equation~\ref{eqn:abs-value-fcn}, it follows that $\sigma_H^\leftarrow \le \sup_{\hat{x} \in \hat{X}} \hat{W}_H(\hat{x})$.
\end{proposition}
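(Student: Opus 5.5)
Fix an arbitrary $\hat{x} \in \hat{X}$. We show $\inf_{x\in X} V_H(x,\hat{x}) \le \hat{W}_H(\hat{x})$; taking the supremum over $\hat{x}$ in Equation~\ref{eqn:simulation-from-bellman} then gives $\sigma_H^\leftarrow \le \sup_{\hat{x}} \hat{W}_H(\hat{x})$. Since the left-hand side is an infimum over $x$, it suffices to exhibit one witness. The natural choice is $x = c_0$, the centroid of $\Psi_\theta(\hat{x})$; this is a point of $X$ because cells are subsets of the box $X$. The claim then reduces to $V_H(c_0,\hat{x}) \le \hat{W}_H(\hat{x})$.

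Expanding Equation~\ref{eqn:finite-bellman} at $x=c_0$, we must bound $d(f^k(c_0),\hat{x}_k)$ for every $0\le k\le H$ and every admissible abstract path $\hat{x}_0=\hat{x}$, $\hat{x}_{j+1}\in\hat{f}(\hat{x}_j)$. First, by induction on $k$ we show $\hat{x}_k \in \hat{A}_k$, where $\hat{A}_k$ is defined by Equation~\ref{eqn:abs-reach-set}. The base case is $\hat{A}_0=\{\hat{x}_0\}$. For the inductive step, $\hat{x}_{k+1}\in\hat{f}(\hat{x}_k)$ means by Equation~\ref{eqn:post-set} that $\Psi_\theta(\hat{x}_{k+1})$ meets $\mathrm{Reach}(\Psi_\theta(\hat{x}_k))$. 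This set is contained in $\mathrm{Reach}(\Psi_\theta(\hat{A}_k))$, interpreting Reach of a union of cells as covering each cell's Reach; this monotonicity is something we need to state explicitly. Hence $\Psi_\theta(\hat{x}_k)\subseteq\Psi_\theta(\hat{A}_k)$.

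Next comes the metric estimate. By definition $d(f^k(c_0),\hat{x}_k)=\inf_{z\in\Psi_\theta(\hat{x}_k)}\|f^k(c_0)-z\|$, and this is bounded by the distance to any particular $z$ in the cell. Because the cell lies in $\Psi_\theta(\hat{A}_k)$, whose centroid is $\hat{c}_k$ and whose radius is $\hat{r}_k$ (the maximal distance from $\hat{c}_k$ to a point of the set, in the same $p$-norm), every $z$ in the cell satisfies $\|z-\hat{c}_k\|\le\hat{r}_k$. The triangle inequality then gives
\[
d(f^k(c_0),\hat{x}_k) \le \|f^k(c_0)-\hat{c}_k\| + \|\hat{c}_k - z\| \le |f^k(c_0)-\hat{c}_k| + \hat{r}_k ,
\]
reading $|\cdot|$ in Equation~\ref{eqn:abs-value-fcn} as the same norm. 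Taking the supremum over $k\le H$ and over admissible paths yields $V_H(c_0,\hat{x})\le\hat{W}_H(\hat{x})$.

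The main obstacle is definitional rather than analytic. The proof relies on three readings: that the radius $\hat{r}_k$ is the circumradius about $\hat{c}_k$ (or an upper bound on it) measured in the norm defining $d$; that $|\cdot|$ denotes that same norm; and that $\hat{A}_{k}$ contains every $k$-step abstract successor. That last point is the key fact linking the nondeterministic supremum in $V_H$ to the deterministic set recursion, and it must be checked against Equation~\ref{eqn:abs-reach-set}. I would state these conventions explicitly at the start of the proof. The triangle-inequality step itself is routine. Notably, soundness (Proposition~\ref{prop:exact-simulation}) is not even needed: the bound holds simply because the witness $c_0$ never has to be inside the cells.
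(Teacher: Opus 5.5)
Your proposal is correct and follows the paper's proof essentially step for step: it takes $c_0$ as the witness for the infimum, uses $\hat{x}_k \in \hat{A}_k$ along every admissible abstract path, bounds $d(f^k(c_0),\hat{x}_k) \le \|f^k(c_0)-\hat{c}_k\| + \hat{r}_k$ by the triangle inequality, and then takes suprema over $k$, over paths, and over $\hat{x}$. You spell out the conventions the paper leaves implicit (Reach of a union of cells, and that $\hat{r}_k$ and $|\cdot|$ use the $p$-norm of $d$); note also that the paper defines $\hat{A}_k = \hat{f}^k(\hat{x}_0)$ outright, which makes $\hat{x}_k \in \hat{A}_k$ definitional, and you are right that soundness is not actually used.
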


\noindent
See the appendix for the proof. This upper bound remains non-differentiable for two reasons. First, the construction of $\hat{A}_k$: the quantization-cell intersection logic in Equation~\ref{eqn:abs-reach-set} makes $\hat r_k$ and $\hat c_k$ non-differentiable in $\Theta$. Hence, we replace the abstract reachable sets with smooth approximations:
\begin{equation}
\label{eqn:conc-reach-set}
    A_0 = \Psi_\theta(\hat{x}_0), \; A_{k+1} = \mathrm{Reach}(A_k) \oplus [\shortneg \tilde{e}, \tilde{e}], \; k \ge 0,
\end{equation}
where $\tilde{e}$ is a fixed inflation term intended to cover the volume between $\mathrm{Reach}(A_k)$ and $\Psi_\theta(\hat{A}_{k+1})$. A suitable choice of $\tilde{e}$ yields an over-approximation of $\Psi_\theta(\hat{A}_{k+1})$ by $A_{k+1}$:

\begin{proposition} [Containment of abstract reachable sets]
\label{prop:etilde}
Let $\hat{A}_k$ and $A_k$ denote the abstract and concrete reachable sets from $\hat{x}_0$ and $\Psi_{\theta}(\hat{x}_0)$, respectively, that evolve according to Equations~\ref{eqn:abs-reach-set} and~\ref{eqn:conc-reach-set}. Let $\tilde{e}=[e_1, \dots, e_n]$, where $\tilde{e}_i \ge \max_{j} \eta_{i,j}$. Then, for any $k \ge 0$, it holds that $\Psi_\theta(\hat{A}_k) \subseteq A_k$.
\end{proposition}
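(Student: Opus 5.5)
The plan is induction on $k$, proving $\Psi_\theta(\hat{A}_k) \subseteq A_k$ for every $k \ge 0$. I read $\Psi_\theta(\hat{A}_k)$ as the union of the cells indexed by $\hat{A}_k$, and $\mathrm{Reach}$ of such a union as $\mathrm{Reach}$ applied to its axis-aligned bounding hull, as in Equation~\ref{eqn:abs-reach-set}.

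\emph{Base case.} By Equations~\ref{eqn:abs-reach-set} and~\ref{eqn:conc-reach-set}, $\hat{A}_0=\{\hat{x}_0\}$ and $A_0=\Psi_\theta(\hat{x}_0)$. So $\Psi_\theta(\hat{A}_0)=A_0$ and the claim holds with equality.

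\emph{Inductive step.} Assume $\Psi_\theta(\hat{A}_k)\subseteq A_k$. I split the step into two claims.

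First, a monotonicity claim: $\mathrm{Reach}(\Psi_\theta(\hat{A}_k)) \subseteq \mathrm{Reach}(A_k)$. Since $A_k$ is an AABB containing $\Psi_\theta(\hat{A}_k)$, it also contains its bounding hull. It therefore suffices that $\mathrm{Reach}$ is inclusion-monotone on AABBs: $B\subseteq A$ implies $\mathrm{Reach}(B)\subseteq\mathrm{Reach}(A)$.

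Second, a covering claim. Take any $\hat{x}'\in\hat{A}_{k+1}$ and any $y\in\Psi_\theta(\hat{x}')$. By Equation~\ref{eqn:abs-reach-set} there is a point $z\in\Psi_\theta(\hat{x}')\cap\mathrm{Reach}(\Psi_\theta(\hat{A}_k))$, and by the first claim $z\in\mathrm{Reach}(A_k)$. The cell $\Psi_\theta(\hat{x}')$ is the box in Equation~\ref{eqn:explicit-concretization}, whose side in dimension $i$ has length $\eta_{i,\hat{x}'_i}\le\max_j\eta_{i,j}\le\tilde{e}_i$. Hence $|y_i-z_i|\le\tilde{e}_i$ in every coordinate, so $y=z+(y-z)\in\mathrm{Reach}(A_k)\oplus[\shortneg\tilde{e},\tilde{e}]=A_{k+1}$. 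Taking the union over all $\hat{x}'\in\hat{A}_{k+1}$ gives $\Psi_\theta(\hat{A}_{k+1})\subseteq A_{k+1}$, which closes the induction.

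\emph{Main obstacle: the monotonicity of $\mathrm{Reach}$.} The operator in Equation~\ref{eqn:reachable-set} linearizes at the centroid of its argument, so $\mathrm{Reach}(B)$ and $\mathrm{Reach}(A)$ use different expansion points and remainder bounds, and inclusion is not automatic.

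My first attempt is to use that both are sound enclosures of $f$ plus Taylor remainders. I would try to show that each face of $\mathrm{Reach}(B)$ is dominated by the corresponding face of $\mathrm{Reach}(A)$. This means bounding $\sup_{z\in B}F_i(z;c_B)+\overline{e}_i(B)$ from above by $\sup_{z\in A}F_i(z;c_A)+\overline{e}_i(A)$. I would expand $F_i(\cdot;c_B)$ around $c_A$ and control the discrepancy by the same second-derivative bound over $A\supseteq B$ that defines $\overline{e}(A)$.

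If this pointwise comparison fails, the fallback is to state monotonicity as a standing assumption on $\mathrm{Reach}$. The paper implicitly treats it as a monotone over-approximation operator, as interval and Taylor-model enclosures are taken to be. Alternatively, I would redefine $\mathrm{Reach}$ on sub-boxes using the remainder bound of the enclosing box, which makes monotonicity immediate. Everything else in the argument is the elementary width bound $\eta_{i,j}\le\tilde{e}_i$.
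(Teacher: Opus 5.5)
Your argument follows the paper's proof: induction with equality at $k=0$, and an inductive step that combines the cell-width bound $|y_i-z_i|\le\eta_{i,\hat{x}'_i}\le\tilde{e}_i$ with the inclusion $\mathrm{Reach}(\Psi_\theta(\hat{A}_k))\subseteq\mathrm{Reach}(A_k)$. The paper uses that inclusion as the second step of its inductive chain without justifying it. So you are right that monotonicity of $\mathrm{Reach}$ is the only substantive point, and the paper offers no argument for it that you have overlooked.

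Your two ways of settling it need correcting.

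Your Taylor comparison does work for the componentwise interval-Hessian remainder that Algorithm~\ref{alg:taylorremainder} actually implements, $\overline{e}_i(A)=\tfrac12\sum_{j,k}M_{ijk}(A)h_{A,j}h_{A,k}$, provided $M_{ijk}(B)\le M_{ijk}(A)$ whenever $B\subseteq A$. Write $d=c_B-c_A$; box containment gives $|d_j|+h_{B,j}\le h_{A,j}$. Expanding $f_i$ and $\nabla f_i$ from $c_A$ to $c_B$ gives
\begin{align*}
&f_i(c_B)+\textstyle\sum_j|\partial_jf_i(c_B)|\,h_{B,j}+\overline{e}_i(B)\\
&\quad\le f_i(c_A)+\textstyle\sum_j|\partial_jf_i(c_A)|\,(|d_j|+h_{B,j})\\
&\qquad+\tfrac12\textstyle\sum_{j,k}M_{ijk}(A)\,(|d_j|+h_{B,j})(|d_k|+h_{B,k}).
\end{align*}
The cross terms factor this way because $M_{ijk}$ is symmetric in $j,k$. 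The right side is at most the upper face of $\mathrm{Reach}(A)$, and the lower face is handled the same way.

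The comparison fails, however, for the norm-based remainder $\tfrac12\max_{z\in A}\|\nabla^2f_i(z)\|\,\|x-c\|^2$ written in Equation~\ref{eqn:reachable-set}. Take $f_1(x)=x_1x_2$ with the spectral norm, $A=[-1,1]^2$ and $B=[0.8,1]\times[-1,1]$. Then $\mathrm{Reach}(A)_1=[-1,1]$, but $\mathrm{Reach}(B)_1=[-0.9,0.9]\oplus[-0.505,0.505]=[-1.405,1.405]\not\subseteq[-1,1]$.

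Your alternative of reusing the enclosing box's remainder does not make monotonicity immediate either. The linear part is still centred at $c_B$, and in the same example $[-0.9,0.9]\oplus[-1,1]\not\subseteq[-1,1]$.

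So you have two options. Either commit to the componentwise remainder, where monotonicity is exactly the lemma above. Or state inclusion-isotonicity of $\mathrm{Reach}$ as an explicit hypothesis, which is what the paper tacitly assumes. With either choice your proof is complete.
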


\noindent
See the appendix for the proof. This choice of $\tilde{e}$ is sufficient to certify coverage of every concretized abstract reachable set, but may be unnecessarily conservative in practice. Therefore, we treat $\tilde{e}$ as a hyperparameter of the objective.

The second source of non-differentiability is the pair of suprema over $k\in[0,H]$ and $\hat{x}\in\hat{X}$. We replace them with nested, temperature-scaled log-sum-exp (LSE) aggregations. Let $\tau_1$ and $\tau_2$ denote the temperatures of the horizon and abstract-state LSEs, respectively. Let $r_k$ and $c_k$ be the radius and centroid of $\mathrm{Reach}(A_k)$, respectively. The \emph{smooth simulation surrogate} is
\begin{align}
    \tilde{\sigma}^\leftarrow_H(s, \theta) &= \tau_2 \log \sum_{\hat{x} \in \hat{X}} \exp \left( \frac{W_H(\hat{x})}{\tau_2}\right),\\
    \text{where } W_H(\hat{x}) &= \tau_1 \log \sum_{0 \le k \le H} \exp \left( \frac{r_k + \|f^k(c_0) - c_k\|}{\tau_1} \right). \notag
\end{align}
$S^3$ therefore depends on four hyperparameters: (1) the chosen horizon $H$, (2) the inner temperature $\tau_1$ controlling the aggregation of $\tilde{s}_k$ over $k\in[0,H]$; (3) the outer temperature $\tau_2$ controlling the aggregation of $W_H$ over $\hat{x}\in\hat{X}$; and (4) the inflation vector $\tilde{e}$ determining the compounding growth of the reachable sets over $k\in[0,H]$. The experimental section provides guidance on selecting these values.

\subsection{Optimization pipeline}
\label{sec:pipeline}

Recall the optimization problem posed in Subproblem 3:
\begin{equation*}
    \min_{\theta \in \Theta} \tilde\sigma_H^{\leftarrow}(s, \theta) \quad \mathrm{s.t.} \quad \sigma^{\rightarrow}(s, \hat{s}_\theta) = 0.
\end{equation*}
Since the cell boundaries, reachable-set propagation, and log-sum-exp aggregations are differentiable in $\theta$, the gradient $\nabla_\theta \tilde{\sigma}^{\leftarrow}_{H}(s, \theta)$ exists $\forall \theta \in \Theta$. In practice, we compute this via automatic differentiation in Python. We do not make any assumptions regarding the convexity of the objective function $S^3$, so we leverage stochastic gradient descent to tune $\theta$.

To handle the constraint $\sigma^{\rightarrow}(s,\hat{s}_{\theta}) = 0$, we leverage the fact that any $\theta \in \Theta$ constitutes a valid quantizer $\psi_\theta$, and that under this quantization scheme coupled with Taylor model-based transition building, Proposition~\ref{prop:exact-simulation} guarantees $s \lesssim \hat{s}_\theta$ (i.e., $\sigma^{\rightarrow}(s,\hat{s}_{\theta}) = 0$) by construction for any $\theta \in \Theta$.

\section{Experimental Evaluation}
\label{sec:exp}

Our experimental evaluation addresses three questions. First, how accurately does $S^3$ predict the finite-horizon simulation metric? Second, does minimizing $S^3$ improve the resulting abstractions? Third, how sensitive is $S^3$ to its hyperparameters? We address these questions first by testing the validity of $S^3$ as a proxy of $\sigma^{\leftarrow}_H$ with various hyperparameters. Then, we deploy $S^3$ to a stochastic gradient descent pipeline to learn $\theta$, and report the final performance in terms of the ground-truth $\sigma^{\leftarrow}_H$ and \emph{verification recall}. Experiments were conducted on three machines equipped with 12th--14th Gen Intel Core i7/i9 processors (12--24 cores), 32--64 GB of RAM.

\begin{table*}[t]
    \centering
    \caption{Correlation coefficients between $S^3$ and simulation metrics, and their evaluation time on randomly sampled parameters. }
    \label{tab:correlation1}

    \setlength{\tabcolsep}{5pt}
    \renewcommand{\arraystretch}{1.15}
    \small

    \begin{tabular}{
        c
        c
        cc
        cc
        cc
        cc
    }
        \toprule
        & &
        \multicolumn{2}{c}{(Max) $\sigma^\leftarrow_H$} &
        \multicolumn{2}{c}{(Mean) $\overline{\delta}_H$} &
        \multicolumn{2}{c}{(Median) $\delta^{0.50}_H$} &
        \multicolumn{2}{c}{Time per sample (s)} \\

        \cmidrule(lr){3-4}
        \cmidrule(lr){5-6}
        \cmidrule(lr){7-8}
        \cmidrule(lr){9-10}

        & $H$
        & $r_p \uparrow$
        & $r_s \uparrow$
        & $r_p \uparrow$
        & $r_s \uparrow$
        & $r_p \uparrow$
        & $r_s \uparrow$
        & (Real) $T \downarrow$
        & (Proxy) $\hat{T} \downarrow$ \\
        \midrule

        \multirow{5}{*}{\rotatebox[origin=c]{90}{Spiral}}

        & 1
        & $0.88^{+0.03}_{-0.06}$ & $0.85^{+0.06}_{-0.08}$
        & $0.45^{+0.12}_{-0.15}$ & $0.42^{+0.14}_{-0.16}$
        & --- & ---
        & $2.95^{+0.05}_{-0.04}$ & $0.01^{+0.03}_{-0.01}$ \\
        
        & 2
        & $0.78^{+0.10}_{-0.10}$ & $0.70^{+0.10}_{-0.14}$
        & $0.69^{+0.10}_{-0.14}$ & $0.67^{+0.10}_{-0.14}$
        & $0.47^{+0.13}_{-0.19}$ & $0.47^{+0.15}_{-0.19}$
        & $3.08^{+0.04}_{-0.04}$ & $0.01^{+0.03}_{-0.01}$ \\
        
        & 3
        & $0.73^{+0.08}_{-0.11}$ & $0.75^{+0.08}_{-0.11}$
        & $0.66^{+0.09}_{-0.14}$ & $0.66^{+0.11}_{-0.15}$
        & $0.57^{+0.12}_{-0.17}$ & $0.56^{+0.14}_{-0.18}$
        & $3.61^{+0.06}_{-0.05}$ & $0.01^{+0.03}_{-0.01}$ \\
        
        & 4
        & $0.62^{+0.13}_{-0.15}$ & $0.65^{+0.10}_{-0.14}$
        & $0.64^{+0.10}_{-0.15}$ & $0.65^{+0.11}_{-0.15}$
        & $0.56^{+0.12}_{-0.16}$ & $0.59^{+0.12}_{-0.17}$
        & $4.11^{+0.06}_{-0.06}$ & $0.01^{+0.03}_{-0.01}$ \\
        
        & 5
        & $0.52^{+0.17}_{-0.17}$ & $0.55^{+0.13}_{-0.17}$
        & $0.63^{+0.10}_{-0.15}$ & $0.65^{+0.11}_{-0.15}$
        & $0.57^{+0.13}_{-0.19}$ & $0.60^{+0.12}_{-0.16}$
        & $4.30^{+0.06}_{-0.06}$ & $0.01^{+0.03}_{-0.01}$ \\

        \midrule

        \multirow{5}{*}{\rotatebox[origin=c]{90}{Unicycle}}
        & 1
        & $0.84^{+0.05}_{-0.07}$ & $0.85^{+0.05}_{-0.08}$
        & $0.01^{+0.18}_{-0.17}$ & $0.04^{+0.19}_{-0.19}$
        & $\shortneg 0.15^{+0.17}_{-0.16}$ & $\shortneg 0.13^{+0.20}_{-0.18}$
        & $18.37^{+1.28}_{-1.29}$ & $0.01^{+0.05}_{-0.01}$ \\
        
        & 2
        & $0.64^{+0.10}_{-0.12}$ & $0.62^{+0.11}_{-0.15}$
        & $0.58^{+0.12}_{-0.15}$ & $0.54^{+0.14}_{-0.17}$
        & $0.22^{+0.16}_{-0.19}$ & $0.20^{+0.18}_{-0.20}$
        & $21.51^{+1.50}_{-1.52}$ & $0.01^{+0.05}_{-0.01}$ \\
        
        & 3
        & $0.58^{+0.11}_{-0.14}$ & $0.58^{+0.12}_{-0.16}$
        & $0.59^{+0.12}_{-0.15}$ & $0.59^{+0.12}_{-0.16}$
        & $0.50^{+0.15}_{-0.17}$ & $0.48^{+0.14}_{-0.18}$
        & $29.20^{+2.04}_{-2.03}$ & $0.01^{+0.06}_{-0.01}$ \\
        
        & 4
        & $0.46^{+0.11}_{-0.14}$ & $0.45^{+0.13}_{-0.16}$
        & $0.55^{+0.10}_{-0.13}$ & $0.57^{+0.12}_{-0.17}$
        & $0.58^{+0.09}_{-0.12}$ & $0.60^{+0.11}_{-0.15}$
        & $46.74^{+3.23}_{-3.24}$ & $0.01^{+0.06}_{-0.01}$ \\
        
        & 5
        & $0.47^{+0.13}_{-0.16}$ & $0.48^{+0.14}_{-0.17}$
        & $0.50^{+0.11}_{-0.14}$ & $0.53^{+0.13}_{-0.17}$
        & $0.58^{+0.10}_{-0.13}$ & $0.58^{+0.12}_{-0.16}$
        & $84.38^{+5.83}_{-5.56}$ & $0.01^{+0.05}_{-0.01}$ \\

        \midrule

        \multirow{5}{*}{\rotatebox[origin=c]{90}{Mountain Car}}
        & 1
        & $0.68^{+0.11}_{-0.12}$
        & $0.70^{+0.10}_{-0.13}$
        & $\shortneg 0.04^{+0.19}_{-0.20}$
        & $\shortneg 0.01^{+0.19}_{-0.20}$
        & $\shortneg0.33^{+0.19}_{-0.17}$
        & $\shortneg 0.33^{+0.19}_{-0.16}$
        & $2.44^{+0.01}_{-0.01}$
        & $0.02^{+0.07}_{-0.02}$ \\

        & 2
        & $0.70^{+0.10}_{-0.12}$
        & $0.69^{+0.11}_{-0.15}$
        & $0.81^{+0.06}_{-0.08}$
        & $0.81^{+0.06}_{-0.10}$
        & $\shortneg0.21^{+0.19}_{-0.21}$
        & $0.24^{+0.19}_{-0.21}$
        & $3.75^{+0.01}_{-0.01}$
        & $0.01^{+0.07}_{-0.01}$ \\

        & 3
        & $0.69^{+0.08}_{-0.11}$
        & $0.69^{+0.10}_{-0.13}$
        & $0.86^{+0.05}_{-0.08}$
        & $0.86^{+0.05}_{-0.07}$
        & $0.56^{+0.12}_{-0.17}$
        & $0.55^{+0.13}_{-0.18}$
        & $4.86^{+0.03}_{-0.02}$
        & $0.01^{+0.06}_{-0.01}$ \\

        & 4
        & $0.71^{+0.08}_{-0.10}$
        & $0.72^{+0.09}_{-0.12}$
        & $0.84^{+0.05}_{-0.07}$
        & $0.83^{+0.06}_{-0.08}$
        & $0.65^{+0.10}_{-0.12}$
        & $0.64^{+0.11}_{-0.14}$
        & $6.11^{+0.03}_{-0.02}$
        & $0.01^{+0.07}_{-0.01}$ \\

        & 5
        & $0.74^{+0.08}_{-0.12}$
        & $0.72^{+0.10}_{-0.13}$
        & $0.81^{+0.05}_{-0.08}$
        & $0.83^{+0.06}_{-0.09}$
        & $0.72^{+0.08}_{-0.11}$
        & $0.73^{+0.09}_{-0.13}$
        & $7.38^{+0.03}_{-0.03}$
        & $0.01^{+0.07}_{-0.01}$ \\

        \bottomrule
    \end{tabular}
    \vspace{-1mm}
\end{table*}

\subsection{Case studies and baselines}

\noindent \textbf{Case studies.} We evaluate our method on three case studies: (1) a two-dimensional ``spiral'' system; (2) a three-dimensional ``unicycle'' (Dubins') system~\cite{dubins_curves_1957} controlled by a deterministic obstacle-avoidance policy; and (3) the Gymnasium MountainCar-continuous-v0 benchmark~\cite{towers_gymnasium_2025}, controlled by a pretrained deep deterministic policy gradient model~\cite{sb3_ddpg_mountaincarcontinuous_v0}. More details for these systems --- including their dynamical equations and visualizations --- are in the appendix.

\noindent \textbf{Counterexample-guided abstraction refinement (CEGAR).} The first baseline is CEGAR~\cite{clarke_counterexample-guided_2000} --- a well-known abstraction refinement geared for verification tasks. On any model, CEGAR finds abstract trajectories that violate a verification property, and then check if they are spurious (the concrete model satisfies the property). If spurious, CEGAR iteratively refines quantization cells along the trajectory until either the counterexample is eliminated, or the permitted number of refinements is exhausted. For comparability to our models of grid resolution $m_i$, we build a slightly smaller, uniform-grid model with resolution $m_i - \kappa_i$ so that the allocated refinement budget is $\prod_{i=1}^nm_i - \prod_{i=1}^nm_i\kappa_i$.

\noindent \textbf{Grid resolution optimization baseline.} The second baseline is the grid-resolution optimization method~\cite{weber_optimized_2017}. Here, a convex optimization problem is solved to minimize the \emph{expected number of successors} in $\hat{f}$ by tuning the common cell aspect ratio, subject to a fixed volume constraint. For comparability to our models of resolution $m_i$, we set this fixed volume to $\prod_{i=1}^n(\overline{x}_i - \underline{x}_i)/m_i$ so that the model will possess roughly the same number of states.

\subsection{Proxy validation}

\noindent \textbf{Evaluating the real finite-horizon simulation metric.}
For a given abstract model $\hat{s}_\theta$ and dynamical system $s$, we compute the finite-horizon reverse simulation metric $\sigma_H^\leftarrow$ by directly evaluating Equation~\ref{eqn:simulation-from-bellman}. Specifically, for each $\hat{x} \in \hat{X}$, we solve $\delta_H(\hat{x}) = \inf_{x \in X} V_H(x,\hat{x})$ (via Powell’s method~\shortcite{powell1964efficient} in SciPy's optimization library) and then take $\sigma_H^\leftarrow = \max_{\hat{x} \in \hat{X}} \delta_H(\hat{x})$. This maximum is the smallest $\delta \ge 0$ satisfying the conditions of finite-horizon $\delta$-approximate reverse simulation, per Definition~\ref{def:reverse-sim} and Equation~\ref{eqn:reverse-sim-metric}. We also consider additional statistics of state-wise values $\delta_H(\hat{x})$, including the mean $\overline{\delta}_H$ over $\hat{X}$ and median $\delta_H^{0.50}$. Unlike the worst-case metric $\sigma_H^\leftarrow$, these statistics summarize abstraction-wide discrepancies and thus provide a broader characterization of abstraction quality.

\noindent \textbf{Experimental setup.} To validate the quality of our proxy, we robustly evaluate the correlation between $S^3$ and the finite-horizon reverse simulation metric, together with its aggregate statistics. For each $H \in \{1,\dots,5\}$, we draw $100$ parameter vectors $\theta$ from a standard normal distribution using JAX NumPy with random seed $0$. For each parameter vector, we construct an abstract model and evaluate the reverse simulation metric of $\hat{s}_\theta$ by $s$. We record the runtime $T$ of this evaluation. We evaluate $S^3$ on these parameters and record both the proxy value and its evaluation time $\hat{T}$. We report $95\%$ bootstrap confidence intervals for the correlation and runtime statistics, computed with SciPy using the robust bias-corrected and accelerated method with $20{,}000$ resamples.

Table~\ref{tab:correlation1} reports the Pearson ($r_p$) and Spearman ($r_s$) correlation between $S^3$ and each reverse simulation statistic.  Here, the hyperparameters are $\tau_1=\tau_2=0.1$ and $\tilde{e}$ set to half of the average side length of the cells (e.g., for the spiral with a domain side length of $10.0$ along the first dimension, we set $\tilde{e}_1 = 10.0/(2 m_1)$, where $m_1$ is the grid resolution).

\noindent \textbf{$S^3$ exhibits high empirical correlation with the finite-horizon simulation metric.} A strong positive correlation suggests that minimizing $S^3$ during optimization will also reduce the underlying simulation metric. As shown in Table~\ref{tab:correlation1}, $S^3$ is strongly correlated with $\sigma_H^\leftarrow$, particularly at shorter horizons. This correlation weakens as $H$ increases, likely due to compounding over- and under-approximation in the smooth reachable-set recursion of Equation~\ref{eqn:conc-reach-set}.


\looseness=-1
\noindent \textbf{Temperature trades off max vs mean correlation.} Per Table~\ref{tab:correlation1}, $S^3$ exhibits moderate positive correlation with the mean and median $\delta$ values, particularly at intermediate and longer horizons. The exception is $H=1$, where both statistics have near-zero variance across the sampled parameter vectors, leaving insufficient dispersion to support a meaningful correlation estimate. For Spiral, the median is effectively constant, which also prevents construction of our confidence interval. This behavior reflects the choice $\tau_1=\tau_2=0.1$, which makes the LSE closely approximate the suprema in Equation~\ref{eqn:abs-value-fcn} and Proposition~\ref{prop:upper-bound}. In the appendix, we repeat the analysis with larger outer temperatures and find that correlation with the mean and median $\delta$ values improves at the expense of correlation with the worst-case metric $\sigma_H^\leftarrow$.

\noindent \textbf{$S^3$ is much faster to evaluate than the finite-horizon simulation metric.} Evaluating $S^3$ requires only $0.01$--$0.02$ seconds per parameter vector across all case studies and horizons, making it practical for gradient-based optimization. By comparison, evaluating the $\sigma_H^\leftarrow$ requires $2.95$--$4.30$ seconds for Spiral, $18.4$--$84.4$ seconds for Unicycle, and $2.44$--$7.38$ seconds for Mountain Car. This cost may be prohibitive for sample-intensive, gradient-free optimization methods.

\subsection{Optimization performance}

\noindent \textbf{Model checking task.} To examine verification utility, we perform model checking using PyModelChecking~\cite{casagrande_albertocasagrandepymodelchecking_2025} on the optimized models. The checked properties are specific to the goal of each environment (see appendix for LTL formulae). For spiral, we check whether it reaches the goal set at its equilibrium and remains in bounds. For unicycle, we check whether it reaches a goal set while avoiding an obstacle and staying within bounds. For mountain car, we check whether it reaches the objective under the pretrained DDPG. We report model checking performance through \emph{recall}: the verified volume of the state space divided by the total volume of ground-truth satisfying state space (the region where the concrete system satisfies the property).

\noindent \textbf{Experimental setup.} To evaluate $S^3$ as an objective for optimizing the quantization parameters, we consider two abstraction sizes for each case study. For the two-dimensional Spiral and Mountain Car systems, we use $m_i\in \{70,100\}$, yielding $4{,}900$ and $10{,}000$ abstract states, respectively. For the three-dimensional Unicycle system, we use $m_i\in\{50,100\}$, yielding $125{,}000$ and $1{,}000{,}000$ states. The initial parameters are drawn from a zero-mean Gaussian distribution of variance $0.1$. We use the hyperparameters from Table~\ref{tab:correlation1} and horizon $H=3$, and $\kappa_i=10$. Model building time is denoted $T_b$ (abstraction and optimization or refinement), verification (model checking) time $T_v$, and their sum is denoted $T_t$. 

\begin{table}

    \centering
    \caption{Model performance across three methods: (a) CEGAR, (b) Weber~\shortcite{weber_optimized_2017}, (c) $S^3$ + SGD (ours).}
    \label{tab:optimization-results-1}
\vspace{-2mm}
    \setlength{\tabcolsep}{2.45pt}
    \renewcommand{\arraystretch}{1.0}
    \footnotesize

    \begin{tabular}{
        c
        c
        c
        ccc
        c
        ccc
    }
        \toprule
        & & &
        \multicolumn{3}{c}{Sim. Metrics} &
        \multicolumn{1}{c}{Task} &
        \multicolumn{3}{c}{Time (s)} \\

        \cmidrule(lr){4-6}
        \cmidrule(lr){8-10}

        & $m_i$
        &
        & $\sigma^\leftarrow_H \downarrow$
        & $\overline{\delta}_H \downarrow$
        & $\delta_H^{0.50} \downarrow$
        & Recall $\uparrow$
        & $T_b \downarrow$
        & $T_v \downarrow$
        & $T_t \downarrow$ \\
        \midrule

        \multirow{6}{*}{\rotatebox[origin=c]{90}{Spiral}}
        & \multirow{3}{*}{70}
        & (a)
        & \textbf{2.020} & \textbf{0.373} & \textbf{0.388}
        & \textbf{0.916}
        & 8.520 & 0.032 & 8.552 \\

        & & (b)
        & 2.064 & 0.372 & 0.429
        & 0.872
        & 0.140 & 0.030 & 0.170 \\

        & & (c)
        & 2.064 & 0.372 & 0.429
        & 0.872
        & \textbf{0.130} & \textbf{0.020} & \textbf{0.150} \\

        \cmidrule(lr){2-10}

        & \multirow{3}{*}{100}
        & (a)
        & \textbf{2.120} & \textbf{0.264} & \textbf{0.281}
        & \textbf{0.939}
        & 24.14 & 0.045 & 24.19 \\

        & & (b)
        & 2.130 & 0.268 & 0.300
        & 0.928
        & \textbf{0.250} & 0.090 & 0.340 \\

        & & (c)
        & 2.130 & 0.268 & 0.300
        & 0.932
        & 0.270 & \textbf{0.030} & \textbf{0.300} \\

        \midrule

        \multirow{6}{*}{\rotatebox[origin=c]{90}{Unicycle}}
        & \multirow{3}{*}{50}
        & (a)
        & $10.19$& $2.253$& $2.162$& 0.234
        & 1921 & 3.213 & 1924 \\

        & & (b)
        & \textbf{9.410} & 2.282 & 2.340
        & 0.220
        & 140.1 & 3.030 & 143.1 \\

        & & (c)
        & 9.970 & \textbf{1.930} & \textbf{1.990}
        & \textbf{0.249}
        & \textbf{138.3} & \textbf{2.220} & \textbf{140.2} \\

        \cmidrule(lr){2-10}

        & \multirow{3}{*}{100}
        & (a)
        & $9.350$& $1.060$& $1.110$& 0.452
        & 4246 & \textbf{16.45} & 4262 \\

        & & (b)
        & \textbf{9.104} & 1.101 & 1.160
        & \textbf{0.806}
        & 1105 & 23.30 & 1128 \\

        & & (c)
        & 10.15 & \textbf{0.933} & \textbf{0.977}
        & 0.786
        & \textbf{1079} & 19.20 & \textbf{1098} \\

        \midrule

        \multirow{6}{*}{\rotatebox[origin=c]{90}{Mountain Car}}
        & \multirow{3}{*}{70}
        & (a)
        & 0.079 & 0.040 & 0.035
        & 0.130
        & 96.91 & 0.152 & 97.05 \\

        & & (b)
        & \textbf{0.070} & 0.035 & 0.028
        & 0.172
        & 13.82 & \textbf{0.030} & 13.83 \\

        & & (c)
        & 0.074 & \textbf{0.032} & \textbf{0.027}
        & \textbf{0.234}
        & \textbf{12.04} & 0.050 & \textbf{12.05} \\

        \cmidrule(lr){2-10}

        & \multirow{3}{*}{100}
        & (a)
        & \textbf{0.067} & 0.026 & 0.022
        & \textbf{0.436}
        & 123.3 & 0.178 & 123.5 \\

        & & (b)
        & 0.068 & \textbf{0.024} & \textbf{0.021}
        & 0.276
        & \textbf{21.10} & \textbf{0.060} & \textbf{21.16} \\

        & & (c)
        & 0.069 & \textbf{0.024} & 0.022
        & 0.300
        & 22.80 & \textbf{0.060} & 22.86 \\

        \bottomrule
    \end{tabular}
\end{table}

\noindent \textbf{$S^3$ scales better than CEGAR.} Across all case studies, $S^3$ with SGD requires far less total computation than CEGAR. For Spiral, total runtime is $0.15$--$0.3$ seconds for $S^3$ versus $8.55$--$24.2$ seconds for CEGAR; for Mountain Car, $12.1$--$22.9$s versus $97.1$--$123.5$s. The largest gap occurs on unicycle, where $S^3$ requires $140$ and $1098$ seconds, compared with $1924$ and $4262$ seconds for CEGAR. This because we optimize a differentiable surrogate, rather than repeatedly rebuilding, checking, and refining the abstraction like CEGAR.

\noindent \textbf{The methods offer complementary parameterizations.} Weber optimizes dimension-wise resolution through a common aspect ratio, CEGAR locally splits selected cells, and $S^3$ optimizes individual gap widths at fixed resolution. Despite these differences, $S^3$ remains competitive throughout. At $m_i=100$ on Spiral, CEGAR slightly improves $\sigma_H^\leftarrow$ and recall ($2.12$, $0.939$) over $S^3$ ($2.13$, $0.932$). On Mountain Car with $m_i=70$, Weber attains the lowest worst-case metric ($0.070$ versus $0.074$), while $S^3$ achieves lower mean and median discrepancies ($0.032$, $0.027$) and higher recall ($0.234$). Therefore, these methods could be composed: Weber can select the dimension-wise resolution, $S^3$ can optimize the gaps, and CEGAR can apply final task-specific refinement.

\looseness=-1
\noindent \textbf{$S^3$ improves abstraction-wide quality.} On Unicycle, $S^3$ improves the mean and median discrepancies even when the worst-case metric is slightly larger than Weber's. At $m_i=50$, it reduces the mean from $2.28$ to $1.93$ and the median from $2.34$ to $1.99$, while increasing recall from $0.220$ to $0.249$. At $m_i=100$, it similarly reduces the mean from $1.10$ to $0.933$ and the median from $1.16$ to $0.977$. This agrees with the proxy-validation results. 


\noindent \textbf{Future work.} Several extensions could improve the tightness and flexibility of $S^3$. First, jointly optimizing the $\arg\inf_{x \in X}$ in Equation~\ref{eqn:simulation-from-bellman} with the quantization parameters rather than arbitrarily selecting the centroid $c_0$ (see the proof of Proposition~\ref{prop:upper-bound}) can tighten the upper-bounding value function (Equation~\ref{eqn:abs-value-fcn}). Second, more expressive quantization schemes like neural parameterizations could represent partitions beyond variable-width grids. Extending the framework to probabilistic transition systems is also an open direction.

\bibliography{ref.bib}


\newpage

\section{Proofs of Propositions}

Throughout these proofs, we denote the concretization of a set of abstract states $\hat{A} \subseteq \hat{X}$ as the union of the concretized abstract states in the set: $\Psi_\theta(\hat{A}) = \bigcup_{\hat{x} \in \hat{A}}\Psi_\theta(\hat{x})$. 

\setcounter{proposition}{0}
\setcounter{figure}{0}

\begin{proposition}[Exact simulation]
    Given a dynamical system $s = (X, X_0, f)$, a parameter vector $\theta \in \Theta$, and a rectilinear state quantization function $\psi_\theta(x)$. Let the abstract state set be $\hat{X} = \psi_\theta(X)$, the initial abstract state set be $\hat{X}_0 = \psi_\theta(X_0)$, and let $\hat{f}(\hat{x})$ satisfy Equation~\ref{eqn:post-set} for all $\hat{x} \in \hat{X}$, constituting $\hat{s} = (\hat{X}, \hat{X}_0, \hat{f})$. Then, it follows that $\hat{s}_\theta$ simulates $s$ exactly, that is, $s \lesssim \hat{s}$.
\end{proposition}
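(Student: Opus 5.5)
The plan is to exhibit an explicit relation and check that it satisfies Definition~\ref{def:forward-sim} with $\varepsilon = 0$. The natural candidate is the graph of the quantizer,
\begin{equation*}
R_0 = \{(x, \hat{x}) \in X \times \hat{X} \mid x \in \Psi_\theta(\hat{x})\}.
\end{equation*}
By condition~1 of Definition~\ref{def:cons-abstraction}, every state has an abstract representative. Since the cells of Equation~\ref{eqn:explicit-concretization} tile $X$, we have $x \in \Psi_\theta(\psi_\theta(x))$, so every $x \in X$ (and in particular every $x_0 \in X_0$, with $\psi_\theta(x_0) \in \hat{X}_0$) is related to some abstract state. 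Boundary points may lie in several closed cells, but they are related to each such cell, which only enlarges $R_0$ harmlessly.

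The first condition is immediate. If $x \in \Psi_\theta(\hat{x})$, then taking $z = x$ in the infimum defining $d$ gives $d(x,\hat{x}) = 0 \le 0$.

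The second condition is the substantive step. Fix $(x,\hat{x}) \in R_0$ and set $A = \Psi_\theta(\hat{x})$, which is an AABB $[\boldsymbol{\ell},\boldsymbol{u}]$. I will invoke the Taylor-model inclusion $f(A) \subseteq \mathrm{Reach}(A)$ established with Equation~\ref{eqn:reachable-set}. To justify it, write $f_i(z) = F_i(z;c) + e_i(z)$. The first term lies in the interval hull of $F_i(A;c)$, and $|e_i(z)| \le \overline{e}_i$ by the second-order remainder bound (this uses twice differentiability of $f$). Hence $f(x) \in \mathrm{Reach}(A)$. Since $f(x) \in X$ and the cells cover $X$, there is $\hat{x}' = \psi_\theta(f(x))$ with $f(x) \in \Psi_\theta(\hat{x}')$. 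Then $\Psi_\theta(\hat{x}') \cap \mathrm{Reach}(A) \ni f(x)$ is nonempty, so $\hat{x}' \in \hat{f}(\hat{x})$ by Equation~\ref{eqn:post-set}, and $(f(x),\hat{x}') \in R_0$ by definition. This also gives condition~2 of Definition~\ref{def:cons-abstraction}.

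The conclusion follows: $R_0$ is a $0$-approximate forward simulation relation, so $\sigma^{\rightarrow}(s,\hat{s}_\theta)=0$ and $s \lesssim \hat{s}_\theta$.

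The main obstacle is the rigor of the inclusion $f(A)\subseteq \mathrm{Reach}(A)$, specifically that the remainder bound is attained at the corners. With $\|x - c\|^2$ and a maximum of the Hessian norm taken over all of $A$, the bound depends on $x$ only through $\|x-c\|$. For any $p$-norm, this distance is maximized over a box at a vertex, and all vertices are equidistant from $c$. So $\overline{\boldsymbol{e}}$ computed at $\boldsymbol{u}$ dominates $\boldsymbol{e}(x)$ for all $x\in A$. I will state this as a short lemma. I will also note that $F_i$ is affine, so its extrema over $A$ occur at vertices and are computed exactly.
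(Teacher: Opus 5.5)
Your proof is correct and follows the same route as the paper's. Both pair each concrete state with the cell containing it, note that $d(x,\hat{x})=0$, and use $f(\Psi_\theta(\hat{x}))\subseteq \mathrm{Reach}(\Psi_\theta(\hat{x}))$ with the intersection rule of Equation~\ref{eqn:post-set} to obtain a successor cell containing $f(x)$. Your version is somewhat tighter than the paper's in three ways. You name the relation $R_0$ explicitly, so one-step closure replaces the paper's informal ``by induction.'' You take $\hat{x}'=\psi_\theta(f(x))$ using $f(X)\subseteq X$, rather than claiming $\mathrm{Reach}$ is covered by the successor cells, which can fail where $\mathrm{Reach}$ leaves $X$. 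And you also check that initial states are related to $\hat{X}_0$.
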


\begin{proof} [Proof of Proposition~\ref{prop:exact-simulation}]
     Recall the distance metric 
    \begin{equation*}
        d(x, \hat{x})=\inf_{z \in \Psi_\theta(\hat{x})} \|x-z\|_p, \quad p \ge 0.
    \end{equation*}
    For every $(x, \hat{x}) \in R_\varepsilon$, two conditions must be met for exact ($\varepsilon=0$) forward simulation to hold (see Definition~\ref{def:forward-sim}). First, the distance from $x$ to $\hat{x}$ must be $0$, that is, $d(x, \hat{x})=0$. Second, there must exist some $\hat{x}' \in \hat{f}(\hat{x})$ such that $(f(x), \hat{x}')\in R_0$ (recursively satisfying $d(f(x), \hat{x}')=0$ and $\exists \hat{x}'' \in \hat{f}(\hat{x}')$ where $(f(f(x)), \hat{x}'')\in R_0$, and so on). Since $d(x, \hat{x})$ is the smallest distance from $x$ to any point in $\Psi_\theta(\hat{x})$, the set of $x$ where $d(x, \hat{x}) = 0$ is exactly the quantization cell $\Psi_\theta(\hat{x})$. 
    
    Furthermore, we know that any concrete point within $\Psi_\theta(\hat{x})$ maps to $\mathrm{Reach}(\Psi_\theta(\hat{x}))$, and that $\mathrm{Reach}(\Psi_\theta(\hat{x}))$ is inscribed by the set of (concretized) successors $\Psi_\theta(\hat{f}(\hat{x}))$ by Equation~\ref{eqn:post-set}, such that $f(\Psi_\theta(\hat{x})) \subseteq \Psi_\theta(\hat{f}(\hat{x}))$. Hence, we have shown that there must exist some $\hat{x}' \in \hat{f}(\hat{x})$ such that $d(f(x), \hat{x}')=0$, satisfying the second condition. 
    
    Since we have proven that these conditions hold on the arbitrary initial pair $(x, \hat{x}) \in R_0$, by induction, we have proven that it holds on every pair thereafter. Therefore, we have proven that $\hat{s}_\theta$ simulates $s$ exactly, that is, $s \lesssim \hat{s}$.
\end{proof}

\begin{proposition}[Upper bound on $\sigma_H^\leftarrow$]
Given $\hat{W}_H(\hat{x})$ per Equation~\ref{eqn:abs-value-fcn}, it follows that $\sigma_H^\leftarrow \le \sup_{\hat{x} \in \hat{X}} \hat{W}_H(\hat{x})$.
\end{proposition}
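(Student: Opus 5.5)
Since $\sigma_H^\leftarrow = \sup_{\hat{x}\in\hat{X}} \inf_{x\in X} V_H(x,\hat{x})$ by Equation~\ref{eqn:simulation-from-bellman}, it suffices to show, for each fixed $\hat{x}_0 \in \hat{X}$, that $\inf_{x \in X} V_H(x,\hat{x}_0) \le \hat{W}_H(\hat{x}_0)$; taking the supremum over $\hat{x}_0$ on both sides then gives the claim. To bound the infimum we do not optimize over $x$ at all. We simply evaluate at the particular witness $x = c_0$, the centroid of $\Psi_\theta(\hat{x}_0)$, which lies in $X$ because the cell is a sub-box of the rectangular domain. Thus $\inf_{x} V_H(x,\hat{x}_0) \le V_H(c_0,\hat{x}_0)$, and the task reduces to bounding $V_H(c_0,\hat{x}_0)$ term by term in $k$.

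By Equation~\ref{eqn:finite-bellman}, $V_H(c_0,\hat{x}_0)$ is a supremum over $0\le k\le H$ and over abstract paths $\hat{x}_0,\hat{x}_1,\dots$ with $\hat{x}_{k+1}\in\hat{f}(\hat{x}_k)$ of $d(f^k(c_0),\hat{x}_k)$. First, by induction on $k$, any such $\hat{x}_k$ lies in $\hat{A}_k=\hat{f}^k(\hat{x}_0)$. This holds because Equation~\ref{eqn:abs-reach-set} applied to a singleton or union reproduces Equation~\ref{eqn:post-set}: the Reach of a union of boxes contains the Reach of each member (reading $\mathrm{Reach}(\Psi_\theta(\hat{A}_k))$ as covering the union of the per-cell Reach sets, or else taking the AABB hull, which is monotone). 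Hence $\Psi_\theta(\hat{x}_k)\subseteq \Psi_\theta(\hat{A}_k)$. It therefore suffices to bound, for every $z\in\Psi_\theta(\hat{A}_k)$, the distance $\|f^k(c_0)-z\|$, since $d(f^k(c_0),\hat{x}_k)=\inf_{z\in\Psi_\theta(\hat{x}_k)}\|f^k(c_0)-z\|$ is at most the distance to any single point of the cell, for instance the point of $\Psi_\theta(\hat{x}_k)$ itself.

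The key estimate is the triangle inequality through the centroid $\hat{c}_k$:
\begin{equation*}
\|f^k(c_0)-z\| \le \|f^k(c_0)-\hat{c}_k\| + \|\hat{c}_k - z\| \le |f^k(c_0)-\hat{c}_k| + \hat{r}_k,
\end{equation*}
using that the radius $\hat{r}_k$ of $\Psi_\theta(\hat{A}_k)$ is defined about its centroid (in the same $p$-norm as $d$, or via its bounding box, with norms reconciled). Taking the supremum over admissible paths and over $0\le k\le H$ then yields $V_H(c_0,\hat{x}_0)\le \hat{W}_H(\hat{x}_0)$ by Equation~\ref{eqn:abs-value-fcn}, which closes the argument.

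The main obstacle is not analytic but definitional. We must fix precisely what the ``centroid'' and ``radius'' of the possibly non-convex union $\Psi_\theta(\hat{A}_k)$ mean so that $\|\hat{c}_k - z\|\le \hat{r}_k$ holds for all $z$ in the union. The cleanest choice is to take the AABB hull of the union, with $\hat{c}_k$ its center and $\hat{r}_k$ its circumradius in the chosen norm. The containment step $\hat{x}_k\in\hat{A}_k$ also needs care, so we will state explicitly the monotonicity of $\mathrm{Reach}$ under the hull convention being used.
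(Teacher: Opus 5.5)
Your proposal is correct and follows the paper's proof essentially step for step. Both arguments evaluate $V_H$ at the witness $c_0$, place every admissible $\hat{x}_k$ in $\hat{A}_k$, apply the triangle inequality through $\hat{c}_k$ using $\|\hat{c}_k - y\|\le\hat{r}_k$ for $y\in\Psi_\theta(\hat{x}_k)\subseteq\Psi_\theta(\hat{A}_k)$, and then take suprema over $k$ and over $\hat{x}_0$.

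Your extra care about the meaning of $\hat{c}_k,\hat{r}_k$ for the non-convex union, and about the containment $\hat{x}_k\in\hat{A}_k$, addresses points the paper passes over. Under the reading $\hat{A}_k=\hat{f}^k(\hat{x}_0)$ (the union of per-cell successors), that containment is immediate. Under the AABB-hull reading you would instead need monotonicity of $\mathrm{Reach}$ itself under inclusion; monotonicity of the hull operator alone is not enough.
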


\begin{proof} [Proof of Proposition~\ref{prop:upper-bound}]
    Recall the distance metric 
    \begin{equation*}
        d(x, \hat{x})=\inf_{z \in \Psi_\theta(\hat{x})} \|x-z\|_p, \quad p \ge 0.
    \end{equation*}
    Now, choose any $\hat{x}_0 \in \hat{X}$. Let $c_0$ be the centroid of $\Psi_\theta(\hat{x})$. By soundness through the construction of $\hat{f}$ according to Equation~\ref{eqn:post-set}. It follows that $\hat{x}_{k+1} \in \hat{f}(\hat{x}_k)$. Then, for every $k$, $\hat{x}_k \in \hat{A}_k$ (Equation~\ref{eqn:abs-reach-set}) and $\Psi_\theta(\hat{x}_k) \subseteq \Psi_\theta(\hat{A}_k)$. Let $\hat{c}_k$ and $\hat{r}_k$ be the centroid and radius of $\Psi_\theta(\hat{A}_k)$, respectively. Then, any $y_k$ in $\Psi_\theta(\hat{x}_k)$ must lie within $\hat{r}_k$ of $\hat{c}_k$, i.e., $\|y_k - \hat{c}_k\| \le \hat{r}_k$. Therefore, it follows that
    \begin{align*}
        d(f^k(c_0), \hat{x}_k) &\le \inf_{y \in \Psi_\theta(\hat{x}_k)} \|f^k(c_0) - y \| \\
        &\le \|f^k(c_0) - y_k\| \\
        &\le \|f^k(c_0) - \hat{c}_k\| + \|\hat{c}_k - y_k\| \\
        &\le \|f^k(c_0) - \hat{c}_k\| + \hat{r}_k.
    \end{align*}
    Taking the supremum over the finite horizon $0 \le k \le H$ gives:
    \begin{equation*}
        V_H(c_0, \hat{x}_o) \le \sup_{0 \le k \le H} \{ \|f^k(c_0) - \hat{c}_k\| + \hat{r}_k \} = \hat{W_H}(\hat{x}).
    \end{equation*}
    Since $c_0 \in \Psi_\theta(\hat{x}_0) \subseteq X$, it follows that:
    \begin{equation*}
        \inf_{x \in X} V_H(x, \hat{x}_0) \le V_H(c_0, \hat{x}_o) \le \hat{W}_H(\hat{x}).
    \end{equation*}
    And finally, by taking the supremum over $\hat{x}_0 \in \hat{X}$, we see:
    \begin{equation*}
        \sigma_H^\leftarrow(s, \hat{s}_\theta) \le \sup_{\hat{x}_0 \in \hat{X} }\inf_{x \in X} V_H(x, \hat{x}_0) \le \sup_{\hat{x}_0 \in \hat{X} } \hat{W}_H(\hat{x}).
    \end{equation*}
    Hence, we have proven the claim directly.
\end{proof}

\begin{proposition} [Containment of abstract reachable sets]
Let $\hat{A}_k$ and $A_k$ denote the abstract and concrete reachable sets from $\hat{x}_0$ and $\Psi_{\theta}(\hat{x}_0)$, respectively, that evolve according to Equations~\ref{eqn:abs-reach-set} and~\ref{eqn:conc-reach-set}. Let $\tilde{e}=[e_1, \dots, e_n]$, where $\tilde{e}_i \ge \max_{j} \eta_{i,j}$. Then, for any $k \ge 0$, it holds that $\Psi_\theta(\hat{A}_k) \subseteq A_k$.
\end{proposition}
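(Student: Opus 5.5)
The proof goes by induction on $k$. The base case $k=0$ is immediate, since $\hat{A}_0=\{\hat{x}_0\}$ and $A_0=\Psi_\theta(\hat{x}_0)$, so $\Psi_\theta(\hat{A}_0)=A_0$. For the inductive step I assume $\Psi_\theta(\hat{A}_k)\subseteq A_k$ and aim to show $\Psi_\theta(\hat{A}_{k+1})\subseteq \mathrm{Reach}(A_k)\oplus[\shortneg\tilde{e},\tilde{e}]=A_{k+1}$. I split this into two containments:
\begin{equation*}
\mathrm{Reach}(\Psi_\theta(\hat{A}_k))\subseteq \mathrm{Reach}(A_k), \qquad \Psi_\theta(\hat{x})\subseteq S\oplus[\shortneg\tilde{e},\tilde{e}] \text{ whenever } \Psi_\theta(\hat{x})\cap S\neq\emptyset,
\end{equation*}
where $S=\mathrm{Reach}(A_k)$ is an AABB.

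\textbf{Second containment (inflation).} This is the easy geometric part. Take a cell $\Psi_\theta(\hat{x})=\prod_i[b_{i,\hat{x}_i},b_{i,\hat{x}_i+1}]$ that meets the box $S=\prod_i[s_i^-,s_i^+]$. In coordinate $i$ the cell's interval meets $[s_i^-,s_i^+]$ and has length $\eta_{i,\hat{x}_i}\le\max_j\eta_{i,j}\le\tilde{e}_i$. Hence the cell's interval lies in $[s_i^--\tilde{e}_i,\,s_i^++\tilde{e}_i]$. Taking the product over $i$ gives $\Psi_\theta(\hat{x})\subseteq S\oplus[\shortneg\tilde{e},\tilde{e}]$. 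By Equation~\ref{eqn:abs-reach-set}, every $\hat{x}\in\hat{A}_{k+1}$ is a cell meeting $\mathrm{Reach}(\Psi_\theta(\hat{A}_k))$. Once the first containment is established, every such cell meets $S$, and taking the union over $\hat{A}_{k+1}$ closes the induction.

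\textbf{First containment (monotonicity of $\mathrm{Reach}$).} This is the main obstacle. $\mathrm{Reach}$ is only defined on AABBs, so I first interpret $\mathrm{Reach}(\Psi_\theta(\hat{A}_k))$ as $\mathrm{Reach}(B)$, where $B$ is the bounding box of $\Psi_\theta(\hat{A}_k)$. Because $A_k$ is itself a box containing $\Psi_\theta(\hat{A}_k)$, we get $B\subseteq A_k$. It then remains to show that $\mathrm{Reach}$ is inclusion-monotone on boxes, which is not automatic: the two Taylor models are centred at different points $c\neq c'$.

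My first attempt is a direct comparison. For $z\in B$, expand $F(z;c)$ around $c'$:
\begin{equation*}
F(z;c)-F(z;c') = f(c)-F(c;c') + \bigl(Df(c)-Df(c')\bigr)(z-c).
\end{equation*}
Both terms are second-order quantities controlled by $\max_{A_k}\|D^2 f_i\|$ and by $\|z-c'\|$, $\|c-c'\|$. I would then try to show that the total offset, plus $B$'s own remainder $\overline{e}(B)$, fits inside $A_k$'s remainder $\overline{e}(A_k)$ together with the linear interval hull of $F(\cdot;c')$ over $A_k$. This requires $\|z-c'\|$ to be comparable to the half-diagonal of $A_k$.

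If this bookkeeping fails to close, I would fall back on either of two options:
\begin{itemize}
\item state inclusion-monotonicity of $\mathrm{Reach}$ as a standing assumption, which holds for natural interval-extension implementations;
\item slightly weaken the proposition so that the induction only uses $f(\Psi_\theta(\hat{A}_k))\subseteq f(A_k)\subseteq\mathrm{Reach}(A_k)$, which is guaranteed by the soundness of $\mathrm{Reach}$ established before Proposition~\ref{prop:exact-simulation}.
\end{itemize}
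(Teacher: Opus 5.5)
\textbf{The skeleton matches the paper, but the key step is left unproved.} Your structure is the paper's: induction from $\Psi_\theta(\hat A_0)=A_0$, the observation that a cell meeting a set lies in that set $\oplus[-\tilde e,\tilde e]$ because its sides are at most $\max_j\eta_{i,j}\le\tilde e_i$, and the chain $\Psi_\theta(\hat A_{k+1})\subseteq\mathrm{Reach}(\Psi_\theta(\hat A_k))\oplus[-\tilde e,\tilde e]\subseteq\mathrm{Reach}(A_k)\oplus[-\tilde e,\tilde e]$. The paper asserts the middle inclusion without comment. You are right that it is the crux, but you leave it open, so the proposal is not yet a proof.

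Neither fallback repairs this. The first assumes the missing lemma. The second proves a different statement: $\hat A_{k+1}$ is defined by intersection with $\mathrm{Reach}(\Psi_\theta(\hat A_k))$, not with $f(\Psi_\theta(\hat A_k))$. Cells that meet only the over-approximation slack are therefore not controlled by $f(\Psi_\theta(\hat A_k))\subseteq\mathrm{Reach}(A_k)$. Whether $\mathrm{Reach}$ of a union means the reach of its bounding box, or (as the analogy with Equation~\ref{eqn:post-set} suggests) the union of cellwise reaches, you need the same lemma: boxes $B\subseteq A$ imply $\mathrm{Reach}(B)\subseteq\mathrm{Reach}(A)$.

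\textbf{Your direct comparison cannot close for the main-text remainder.} Read the remainder $\frac12\max_A\|\nabla^2 f_i\|\,\|x-c\|^2$ with the spectral norm and Euclidean distance; then the lemma is false. Take $f_1(x)=x_1x_2$, whose Hessian has norm $1$. On a box in the positive quadrant with centre $c$ and half-widths $h$, the upper end of the first coordinate of $\mathrm{Reach}$ is
\begin{equation*}
c_1c_2+c_2h_1+c_1h_2+\tfrac12(h_1^2+h_2^2).
\end{equation*}
For $A=[0,2]\times[0,10]$ this equals $28$. For $B=[1,2]\times[0,10]\subseteq A$ it equals $30.125$.

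\textbf{The lemma does hold for the appendix remainder.} Use the componentwise interval-Hessian remainder $e_i=\frac12\sum_{j,k}M_{ijk}h_jh_k$ from the appendix algorithm, assuming $M(\cdot)$ encloses $|\partial_{jk}f_i|$ and is inclusion-isotone. Freeze $M=M(A)$ and shrink $A$ to $B$ one face at a time. When the lower face in direction $j$ moves inward, $c_j$ increases and $h_j$ decreases at rate $\frac12$, and almost everywhere
\begin{equation*}
\frac{d}{d\varepsilon}\Bigl[f_i(c)+\sum_k|\partial_kf_i(c)|h_k+\tfrac12\sum_{k,l}M_{ikl}h_kh_l\Bigr]=\tfrac12\bigl(\partial_jf_i-|\partial_jf_i|\bigr)+\tfrac12\sum_k\bigl(\mathrm{sgn}(\partial_kf_i)\,\partial_{jk}f_i-M_{ijk}\bigr)h_k\le 0 .
\end{equation*}
Symmetric computations handle the upper face and the lower endpoint of $\mathrm{Reach}$. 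The path is Lipschitz, so integrating gives $\mathrm{Reach}(B)\subseteq\mathrm{Reach}(A)$ with $M(A)$, and $M(B)\le M(A)$ finishes the lemma. This completes both your argument and the paper's under that reading of $\mathrm{Reach}$.
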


\begin{proof} [Proof of Proposition~\ref{prop:etilde}]
    For any $\hat{x} \in \hat{X}$, define $B \subseteq X$ such $B \cap \Psi_\theta(\hat{x}) \ne \emptyset$. Pick any $x \in \Psi_\theta(\hat{x})$ and $y \in B \cap \Psi_\theta(\hat{x})$. Then, for each dimension $i \in \{1, \dots, n\}$, the absolute distance between $x_i$ and $y_i$ is bounded by that cell length of $\hat{x}$, which is bounded by the maximum cell length, i.e. $|x_i - y_i| \le \eta_{i,\hat{x}_i} \le \max_{j} \eta_{i,j}$. Hence, for $\tilde{e}=[e_1, \dots, e_n]$, where $\tilde{e}_i \ge \max_{j} \eta_{i,j}$, it follows:
    \begin{equation}
    \label{eqn:inc-under-inflation}
        \bigcup_{\substack{\hat{x} \in \hat{X} \\ \Psi_\Theta(\hat{x}) \cap B}} \Psi_\theta(\hat{x}) \subseteq B \oplus [-\tilde{e}, \tilde{e}]
    \end{equation}
    We now prove the claim that $\Psi_\theta(\hat{A}_k) \subseteq A_k$ by induction on $k \ge 0$. First, recall that $\Psi_\theta(\hat{A}_0)=\Psi_\theta(\hat{x}_0) = A_0$. Suppose that $\Psi_\theta(\hat{A}_k) \subseteq A_k$. Then, by Equation~\ref{eqn:abs-reach-set} and Equation~\ref{eqn:inc-under-inflation}:
    \begin{align*}
        \Psi_{\theta}(\hat{A}_{k+1}) &= \Psi_\theta(\{\hat{x} \in \hat{X} \mid \Psi_\theta(\hat{x}) \cap\mathrm{Reach}(\Psi_\theta(\hat A_k)) \ne \emptyset \}) \\
        &\subseteq \mathrm{Reach}(\Psi_\theta(\hat{A}_k)) \oplus [-\tilde{e}, \tilde{e}] \\
        &\subseteq \mathrm{Reach}(A_k) \oplus [-\tilde{e}, \tilde{e}] \\
        &= A_{k+1}.
    \end{align*}
    Hence, we have proven that $\Psi_\theta(\hat{A}_k) \subseteq A_k$ for $k \ge 0$.
\end{proof}



\section{Additional System Descriptions}

\subsubsection{Spiral.} In our experiments, we utilize a ``spiral'' case study that models a linear, time-invariant, discrete-time, and globally exponentially stable (GES) system. This two-dimensional system lives in the domain $X= [-10, 10] \times [-10, 10]$, and evolves according to
\begin{equation*}
    x[k+1] = x^{*} + 
    \begin{bmatrix}
        0.8 & -0.3 \\
        0.3 & 0.8
    \end{bmatrix}
    (x[k] - x^{*}), \,\,
    \,\, x^{*} = 
    \begin{bmatrix}
        5.0 \\
        5.0
    \end{bmatrix}.
\end{equation*}
If during simulation the system's state $x$ hits the boundary of $X$, then the system has crashed, and the process is terminated. In verification, we check whether the system reaches within $2.0$ units of the equilibrium $x^*$ and avoids crashing. In linear temporal logic, this is written:
\begin{equation*}
    \varphi := (x\in X) ~ \mathcal{U}\left[\|x - x^*\| \le 2.0\right]
\end{equation*}
Furthermore, we define $X_0 = X$. 

\begin{figure}[h]
    \label{fig:spiral-disp-field}
    \centering
    \includegraphics[width=0.9\linewidth]{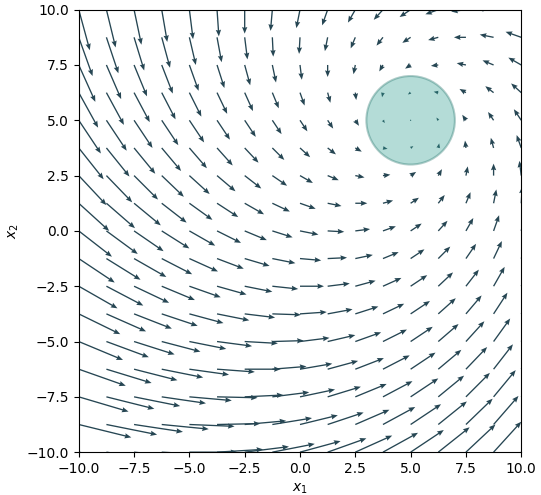}
    \caption{Displacement field of the spiral system. Blue circle is the goal set.}
\end{figure}

\subsubsection{Autonomous Unicycle.} The unicycle dynamical system in our experiments evolves according to Dubins path constraints. This three-dimensional system lives in the domain $X =  [0.0, 50.0] \times [0.0, 50.0] \times [-\pi, \pi]$, and we let $X_0 =  [0.0, 50.0] \times [0.0, 50.0] \times [-\pi/2, \pi/2]$. The system evolves according to:
\begin{equation*}
    f(x)
    =
    \begin{bmatrix}
        x_1[k] \\
        x_2[k] \\
        x_3[k]
    \end{bmatrix}
    +
    \begin{bmatrix}
        v\cos x_3[k] \\
         v\sin x_3[k] \\
         u[k]
    \end{bmatrix}
    \Delta t,
\end{equation*}
where $\Delta t = 0.5$ is a fixed sampling time, $v = 5.0$ is the constant unicycle speed, and $u$ is the control action (heading rate of change) selected from a deterministic state controller designed for goal-attraction and obstacle-repulsion. The objective of this system is to reach a cylindrical goal (centered at $y^{g} = (x^g_1=40.0, x^g_2=20.0)$ with radius $r^{\mathrm{g}} = 8.0$) while avoiding a cylindrical obstacle (centered at $y^{\mathrm{o}} = (x^o_1=25.0, x^o_2=25.0)$ with radius $r^{o} = 5.0$) and the boundary of $X$. In linear temporal logic, this property is written:
\begin{equation*}
   \varphi := [x\in X \wedge d_1(x) > 5] ~\mathcal{U}~(d_2(x) \le 2),
\end{equation*}
where $d_1(x) = \sqrt{(x_1-25)^2+(x_2-25)^2}$ and $d_2(x) = \sqrt{(x_1-40)^2+(x_2-20)^2}$.

The unicycle controller used in our experiments is a smooth, deterministic, state-feedback law that selects the angular velocity $u$ by steering the system toward a desired heading $\alpha_d(x)$ computed from the current state. At a high level, the controller forms a planar guidance vector $v$ by combining an \emph{attractive} vector $v^{\mathrm{att}}$ that points toward the goal with a \emph{repulsive} vector $v^{\mathrm{rep}}$ that pushes the system away from the obstacle. The resulting vector determines the desired heading $\alpha_d$, which is used to determine the control action (heading rate of change) $u$.

Let $y = (x_1,x_2)$ denote the positional component of the state $x$, and let $ d^{\mathrm{obs}}(x) = y - y^{\mathrm{o}}$ be the displacement vector. Its Euclidean norm $\|d^{\mathrm{obs}}(x)\|$ is the distance from the system to the obstacle center. From this, we calculate an exponentially activated clearance $\tilde{c}$ from the absolute clearance $c$ from the system's position and the obstacle boundary:
\begin{equation*}
   \tilde{c}(x) = e^{-\alpha c(x)}, \quad  c(x) = \|d^{\mathrm{obs}}(x)\| - r^{\mathrm{obs}},
\end{equation*}
where $\alpha \in \mathbb{R}_{>0}$ controls the rate of decay. Thus, when the system lies on the obstacle boundary ($c(x)=0$), we have $\tilde{c}(x)=1$, and this quantity decays toward $0$ as the system moves away from the obstacle.

A repulsive vector is computed as
\begin{equation*}
    v^{\mathrm{rep}}(x) =
    \left[\frac{\tilde{c}(x)}{\|d^{\mathrm{obs}}(x)\|^3 + \epsilon}\right] d^{\mathrm{obs}}(x),
\end{equation*}
where $\epsilon \in \mathbb{R}_{>0}$ is a small constant included to avoid singular behavior near the obstacle center. The attractive vector is then computed $v^{\mathrm{att}}(x) = y^{\mathrm{goal}} - y$. Then, $ v^{\mathrm{rep}}$ and $v^{\mathrm{att}}(x)$ are combined via a weighted sum $ v(x) = k^{\mathrm{rep}} v^{\mathrm{rep}}(x) + k^{\mathrm{att}} v^{\mathrm{att}}(x)$, where $k^{\mathrm{rep}}$ and $k^{\mathrm{att}}$ are repulsion and attraction gains, respectively. Intuitively, when the system is far from the obstacle, $v^{\mathrm{rep}}(x)$ is small and the heading is dominated by attraction toward the goal; near the obstacle, the repulsive term grows and steers the system away from collision.

Finally, desired heading is computed by $ \alpha_d(x) = \arctan(v_1(x), v_2(x))$, and the control input is determined by $ u(x) = u_{\max}\tanh(k_{\alpha}(\alpha_d(x)-x_3))$
and the control input is chosen as where $u_{\max}$ sets the allowable turn rate (so that $u \in [-u_{\max},u_{\max}]$) and $k_{\alpha}$ is a tunable gain on the heading error.

For our experiments, we select the following controller parameters. We set the decay factor $\alpha=0.6$ and $\epsilon=10^{-6}$. The controller prioritizes obstacle-avoidance over goal-reaching by letting $k^{\mathrm{att}}=1.0$ and $k^{\mathrm{rep}}=8.0$. Furthermore, the controller gain $k_{\alpha}=2.5$, and $u_{\max}=\pi/4$.

\begin{figure}[h]
    \centering
    \includegraphics[width=0.9\linewidth]{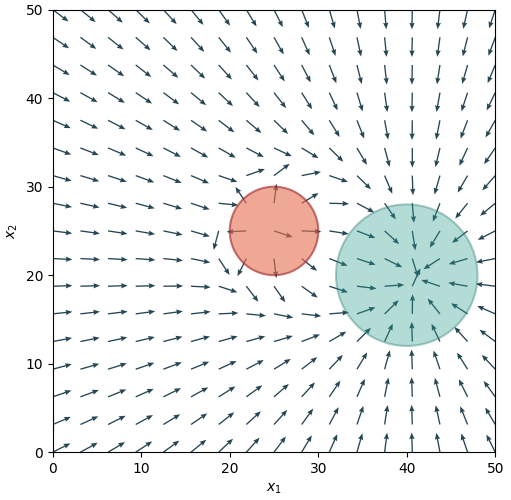}
    \caption{Displacement field of the unicycle system for the $x_3=0$ (heading angle) slice. Red circle is the obstacle, and blue circle is the goal set.}
    \label{fig:placeholder}
\end{figure}

\subsubsection{Mountain Car.} We utilize the Gymnasium MountainCar-continuous-v0 benchmark in our evaluation~\cite{towers_gymnasium_2025}. This environment models a two-dimensional control problem in $X = [-1.2, 0.6] \times [-0.07,0.07]$, where $x_1$ is the horizontal position of the vehicle and $x_2$ is its velocity. The system evolves according to:
\begin{align*}
    x_2[k+1] &= x_2[k] + 0.0015u[k] - 0.0025\cos 3x_1[k] \\
    x_1[k+1] &= x_2[k+1] + x_1[k]
\end{align*}
At each step, the agent selects from actions $u \in [-1, 1]$, corresponding respectively to the application of a negative force (throttle leftward), no force, or a positive force (throttle rightward). The control objective of the Mountain Car is to reach the top of a steep incline ($x_1 \ge 0.45$). In linear temporal logic, this propery is written
\begin{equation*}
    \varphi := \Diamond (x_1 \ge 0.45).
\end{equation*}
For our experiments, we utilize a pretrained deep deterministic policy gradient (DDPG) from Stable-Baselines3~\shortcite{sb3_ddpg_mountaincarcontinuous_v0}.


\begin{figure}[h]
    \centering
    \includegraphics[width=0.9\linewidth]{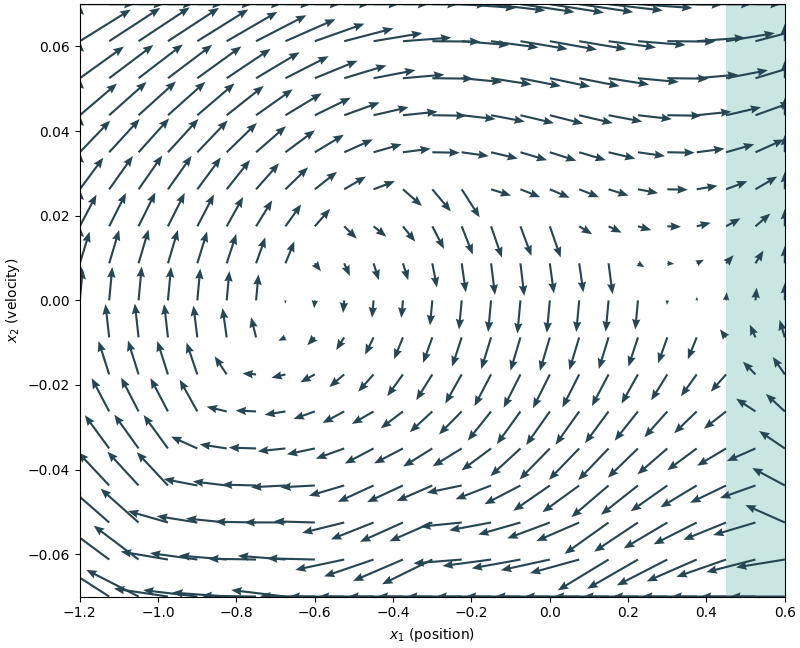}
    \caption{Displacement field of the Mountain Car under the pretrained DDPG. Blue region is the goal set ($x_1 \ge 0.45$).}
    \label{fig:placeholder}
\end{figure}

\section{Algorithms to Build Abstractions}

To construct a finite abstraction $\hat{s}$ of a continuous-state, discrete-time dynamical system $s$ for a given quantization parameter vector $\theta$, we combine rectilinear state quantization with Taylor model-based reachability. Abstract transitions are assigned according to Equation~\ref{eqn:post-set}, yielding an abstraction that satisfies the forward-simulation condition by construction, as established in Proposition~\ref{prop:exact-simulation}.

As a preprocessing step, we derive the Jacobian and Hessian of the closed-loop transition map $f$. For sufficiently simple systems, these derivatives may be obtained manually. In our implementation, we use SymPy to symbolically derive, simplify, and cache the corresponding analytical expressions. Precomputing these derivatives substantially reduces the cost of evaluating them repeatedly across the quantization cells. In our early implementations, computing the derivatives in the loop using automatic differentiation in JAX introduced excessive computational overhead.

Given $\theta$, Algorithm~\ref{alg:getcelledges} maps the unconstrained gap weights to positive cell widths using the normalized Softplus parameterization and computes the corresponding grid boundaries. The Cartesian product of the resulting one-dimensional intervals defines the abstract state set $\hat{X}$. The initial abstract state set $\hat{X}_0$ is then obtained by selecting all quantization cells whose concretizations intersect the initial set $X_0$.

\begin{algorithm}
\caption{\texttt{CellEdges}}
\begin{algorithmic}[1]
\Require Gap weights $\omega$; lower bound $\underline{x}$ and upper bound $\overline{x}$
\State $m \gets \mathrm{len}(\omega)$, $b \gets \langle \underline{x} \rangle$
\For{$j = 2, \dots, m$}
    \State $\eta \gets (\overline{x} - \underline{x})\frac{\mathrm{Softplus}(\omega_{j})}{\sum_{j=1}^{m_i}\mathrm{Softplus}(\omega_{ij})}$
    \State \Call{Append}{$b, b_{j-1}+\eta$}
\EndFor
\State \Return $b$
\end{algorithmic}
\label{alg:getcelledges}
\end{algorithm}

For each abstract state $\hat{x}$, we compute the center and vertices of its associated quantization cell. The vertices are propagated through the first-order Taylor approximation of $f$ about the cell center. Because this approximation is affine, its coordinatewise extrema over the cell are attained at the vertices, and the transformed vertices therefore determine an axis-aligned bounding box for the linearized image. Algorithm~\ref{alg:taylorremainder} computes a componentwise bound on the Taylor remainder using bounds on the Hessian over the cell. To avoid searching for the maximum Hessian spectral norm within each quantization cell, we derive the \emph{interval Hessian} $[H]$. This provides an analytic expression for the component-wise lower and upper bounds of the Hessian evaluated within an AABB. The computed remainder bound is used to inflate the linearized image, producing a conservative over-approximation of the one-step reachable set.

\begin{algorithm}
\caption{\texttt{TaylorRemainder}}
\begin{algorithmic}
    \Require Analytic interval Hessian $[H] = [H^{\min}, H^{\max}]$; AABB $[\ell, u]$; state dimensions $n$
    \Ensure$ e \in \mathbb{R}^n$
    
    \State $h \gets \frac{1}{2}(u-\ell)$
    \State $e \gets 0 \in \mathbb{R}^n$
    
    \For{$i=1,\dots,n$}
        \For{$j=1,\dots,n$}
            \For{$k=1,\dots,n$}
                \State $M_{ijk} \gets \max\left\{
                    |H^{\min}_{ijk}|,
                    |H^{\max}_{ijk}|
                \right\}$
                \State $e_i \gets e_i + \frac{1}{2} M_{ijk} h_j h_k$
            \EndFor
        \EndFor
    \EndFor
    
    \State \Return $e$
\end{algorithmic}
\label{alg:taylorremainder}
\end{algorithm}


Finally, we identify all quantization cells whose concretizations intersect the inflated reachable-set approximation and assign the corresponding abstract states as successors of $\hat{x}$. Algorithm~\ref{alg:buildabstractmodel} applies this procedure to every abstract state and returns the abstract state set $\hat{X}$, initial abstract state set $\hat{X}_0$, and abstract dynamics $\hat{f}$.

\begin{algorithm}[h]
\caption{Construct an abstraction of a dynamical system}
\begin{algorithmic}[1]
\Require Concrete system
$s=(X,X_0,f)$, where
$X=\prod_{i=1}^{n}[\underline{x}_i,\overline{x}_i]$ and
$X_0=\prod_{i=1}^{n}[\underline{y}_i,\overline{y}_i]$;
quantization parameters
$\theta=\{\omega_1,\ldots,\omega_n\}$

\State
\State 
\Comment{Derive Taylor model/remainder components}

\State $J \gets \bigl(z \mapsto\left.\frac{\partial f}{\partial x}\right|_{x=z}\bigr)$, $[H] \gets \bigl(z \mapsto\left.\frac{\partial^2 f}{\partial x^2}\right|_{x=z}\bigr)$
\State $F \gets (x,c \mapsto f(c)+J(c)(x-c))$

\State
\State 
\Comment{Get quantization cell boundaries from $\theta$}
\For{$i \gets 1$ to $n$}
    \State $b_i \gets
        \Call{CellEdges}{\omega_i,\underline{x}_i,\overline{x}_i}$
    \State $m_i \gets |b_i|-1$
\EndFor

\State
\State 
\Comment{Build the quantization cells}
\State $\hat{X} \gets \{1, \dots, m_1\} \times \dots \times \{1, \dots, m_n\}$
\State $\hat{X}_0 \gets \emptyset$
\State $\ell(\hat{x}) \gets
    \left\langle b_{i,\hat{x}_i}\right\rangle_{i=1}^{n}$, $u(\hat{x}) \gets \left\langle b_{i,\hat{x}_i+1}\right\rangle_{i=1}^{n}$
\State $\mathcal{C}(\hat{x}) \gets
    [\ell(\hat{x}),u(\hat{x})]$

\State 
\Comment{Identify successors of every $\hat{x}$}
\ForAll{$\hat{x}\in\hat{X}$}
    \State $\ell \gets \ell(\hat{x})$, $u \gets u(\hat{x})$, $c \gets (\ell + u)/2$

    \State $\mathcal{V} \gets
        \displaystyle \{\ell_1,u_1\} \times \dots \times \{\ell_n,u_n\}$

    \State $\mathcal{Y} \gets
        \{F(v;c)\mid v\in\mathcal{V}\}$

    \State $e \gets
        \Call{TaylorRemainder}{H,[\ell,u]}$

    \State $\ell' \gets
        \left\langle
        \min_{y\in\mathcal{Y}} y_i
        \right\rangle_{i=1}^{n}
        -e$

    \State $u' \gets
        \left\langle
        \max_{y\in\mathcal{Y}} y_i
        \right\rangle_{i=1}^{n}
        +e$

    \State $\hat{f}(\hat{x}) \gets
        \{
        \hat{x}'\in\hat{X}
        \mid
        [\ell',u']
        \cap \mathcal{C}(\hat{x}')
        \neq\emptyset
        \}$
    \State $\hat{X}_0 \gets
        \{
        \hat{x}'\in\hat{X}
        \mid
        [\underline{y},\overline{y}]
        \cap \mathcal{C}(\hat{x}')
        \neq\emptyset
        \}$
\EndFor

\State \Return $\hat{X}, \hat{X}_0, \hat{f}$
\end{algorithmic}
\label{alg:buildabstractmodel}
\end{algorithm}

\section{Additional Experiments}

Our additional experiments address three questions. First, what is the effect of temperature on the correlation between $S^3$ and the reverse simulation metric (and the mean/median $\delta$s)? Second, how do alternative $S^3$ hyperparameters affect grid optimization? Third, do these effects agree with our findings in the proxy validation experiments? The code used to conduct all experiments is publicly available~\cite{tea_s3_abstraction_optimization}.

\subsection{Proxy validation}

\subsubsection{Temperature trade off max vs mean correlation.} In the discussion of Table~\ref{tab:correlation1}, we identified a trade-off between the correlation of $S^3$ with worst-case and abstraction-wide discrepancy statistics. This trend persists across Tables~\ref{tab:correlation1}, \ref{tab:correlation2}, and \ref{tab:correlation3}: increasing the temperature hyperparameters of $S^3$ produces a modest but consistent increase in its correlation with the mean state-wise discrepancy $\overline{\delta}_H$, accompanied by a slight reduction in its correlation with the worst-case reverse simulation metric $\sigma^\leftarrow_H$. This behavior is consistent with the temperature-scaled log-sum-exp aggregations used in $S^3$. At low temperatures, log-sum-exp closely approximates the maximum of its inputs, causing the surrogate to emphasize the largest discrepancies. As the temperature increases, the aggregation becomes less dominated by the maximum and more sensitive to the distribution of discrepancies across abstract states and time steps. The temperature parameters therefore control the extent to which $S^3$ reflects worst-case versus abstraction-wide discrepancy.

\begin{table*}
    \centering
    \caption{Correlation coefficients between $S^3$ and simulation metrics, and their evaluation time on randomly sampled parameters. LSE temperatures are set to $\tau_1=\tau_2=0.5$.}
    \label{tab:correlation2}

    \setlength{\tabcolsep}{5pt}
    \renewcommand{\arraystretch}{1.00}
    \small

    \begin{tabular}{
        c
        c
        cc
        cc
        cc
        cc
    }
        \toprule
        & &
        \multicolumn{2}{c}{$\sigma^\leftarrow_H$} &
        \multicolumn{2}{c}{$\overline{\delta}_H$} &
        \multicolumn{2}{c}{$\delta^{0.50}_H$} &
        \multicolumn{2}{c}{Time per sample (s)} \\

        \cmidrule(lr){3-4}
        \cmidrule(lr){5-6}
        \cmidrule(lr){7-8}
        \cmidrule(lr){9-10}

        & $H$
        & $r_p$
        & $r_s$
        & $r_p$
        & $r_s$
        & $r_p$
        & $r_s$
        & $T$
        & $\hat{T}$ \\
        \midrule

        \multirow{5}{*}{\rotatebox[origin=c]{90}{Spiral}}

        & 1
        & $0.76^{+0.08}_{-0.10}$ & $0.72^{+0.09}_{-0.12}$
        & $0.50^{+0.12}_{-0.14}$ & $0.48^{+0.14}_{-0.16}$
        & $-$ & $-$
        & $5.07^{+0.19}_{-0.21}$ & $0.01^{+0.03}_{-0.01}$ \\
        
        & 2
        & $0.72^{+0.10}_{-0.12}$ & $0.65^{+0.11}_{-0.15}$
        & $0.81^{+0.06}_{-0.09}$ & $0.79^{+0.07}_{-0.10}$
        & $0.70^{+0.08}_{-0.12}$ & $0.68^{+0.10}_{-0.13}$
        & $5.25^{+0.18}_{-0.21}$ & $0.01^{+0.04}_{-0.01}$ \\
        
        & 3
        & $0.70^{+0.10}_{-0.14}$ & $0.67^{+0.11}_{-0.15}$
        & $0.78^{+0.07}_{-0.10}$ & $0.76^{+0.08}_{-0.12}$
        & $0.72^{+0.08}_{-0.12}$ & $0.70^{+0.10}_{-0.13}$
        & $6.05^{+0.21}_{-0.24}$ & $0.01^{+0.04}_{-0.01}$ \\
        
        & 4
        & $0.60^{+0.13}_{-0.18}$ & $0.62^{+0.12}_{-0.17}$
        & $0.74^{+0.07}_{-0.10}$ & $0.74^{+0.08}_{-0.12}$
        & $0.70^{+0.09}_{-0.12}$ & $0.71^{+0.09}_{-0.13}$
        & $6.85^{+0.23}_{-0.27}$ & $0.01^{+0.04}_{-0.01}$ \\
        
        & 5
        & $0.50^{+0.16}_{-0.18}$ & $0.52^{+0.14}_{-0.18}$
        & $0.73^{+0.07}_{-0.10}$ & $0.73^{+0.09}_{-0.12}$
        & $0.70^{+0.09}_{-0.15}$ & $0.72^{+0.09}_{-0.13}$
        & $7.23^{+0.25}_{-0.28}$ & $0.01^{+0.03}_{-0.01}$ \\

        \midrule

        \multirow{5}{*}{\rotatebox[origin=c]{90}{Unicycle}}
        & 1
        & $0.84^{+0.05}_{-0.06}$ & $0.87^{+0.04}_{-0.07}$
        & $0.00^{+0.18}_{-0.18}$ & $0.03^{+0.19}_{-0.19}$
        & $-0.13^{+0.16}_{-0.16}$ & $-0.10^{+0.19}_{-0.18}$
        & $13.21^{+0.19}_{-0.21}$ & $0.01^{+0.04}_{-0.01}$ \\
        
        & 2
        & $0.70^{+0.08}_{-0.12}$ & $0.68^{+0.10}_{-0.14}$
        & $0.62^{+0.11}_{-0.14}$ & $0.59^{+0.012}_{-0.16}$
        & $0.23^{+0.16}_{-0.18}$ & $0.21^{+0.18}_{-0.19}$
        & $15.52^{+0.23}_{-0.25}$ & $0.01^{+0.04}_{-0.01}$ \\
        
        & 3
        & $0.56^{+0.12}_{-0.14}$ & $0.56^{+0.12}_{-0.16}$
        & $0.59^{+0.11}_{-0.14}$ & $0.59^{+0.12}_{-0.16}$
        & $0.50^{+0.14}_{-0.17}$ & $0.48^{+0.14}_{-0.18}$
        & $21.31^{+0.40}_{-0.42}$ & $0.01^{+0.05}_{-0.01}$ \\
        
        & 4
        & $0.46^{+0.12}_{-0.14}$ & $0.45^{+0.13}_{-0.16}$
        & $0.56^{+0.11}_{-0.13}$ & $0.58^{+0.11}_{-0.16}$
        & $0.59^{+0.10}_{-0.12}$ & $0.61^{+0.11}_{-0.15}$
        & $34.52^{+1.01}_{-0.97}$ & $0.01^{+0.06}_{-0.01}$ \\
        
        & 5
        & $0.47^{+0.12}_{-0.15}$ & $0.48^{+0.14}_{-0.16}$
        & $0.53^{+0.11}_{-0.13}$ & $0.56^{+0.12}_{-0.16}$
        & $0.60^{+0.10}_{-0.13}$ & $0.60^{+0.12}_{-0.15}$
        & $64.66^{+2.85}_{-2.53}$ & $0.01^{+0.05}_{-0.01}$ \\

        \midrule

        \multirow{5}{*}{\rotatebox[origin=c]{90}{Mountain car}}
        & 1
        & $0.76^{+0.08}_{-0.10}$ & $0.76^{+0.08}_{-0.12}$
        & $-0.07^{+0.19}_{-0.19}$ & $-0.05^{+0.20}_{-0.20}$
        & $-0.46^{+0.20}_{-0.16}$ & $-0.40^{+0.19}_{-0.17}$
        & $3.64^{+0.16}_{-0.19}$ & $0.02^{+0.10}_{-0.02}$ \\
        
        & 2
        & $0.75^{+0.08}_{-0.10}$ & $0.75^{+0.09}_{-0.12}$
        & $0.85^{+0.05}_{-0.07}$ & $0.83^{+0.06}_{-0.09}$
        & $0.31^{+0.20}_{-0.23}$ & $0.28^{+0.19}_{-0.22}$
        & $5.00^{+0.20}_{-0.24}$ & $0.02^{+0.09}_{-0.02}$ \\
        
        & 3
        & $0.75^{+0.07}_{-0.09}$ & $0.77^{+0.08}_{-0.11}$
        & $0.89^{+0.04}_{-0.06}$ & $0.87^{+0.05}_{-0.07}$
        & $0.64^{+0.10}_{-0.14}$ & $0.61^{+0.12}_{-0.17}$
        & $6.41^{+0.23}_{-0.26}$ & $0.02^{+0.08}_{-0.02}$ \\
        
        & 4
        & $0.78^{+0.06}_{-0.08}$ & $0.77^{+0.08}_{-0.11}$
        & $0.88^{+0.04}_{-0.05}$ & $0.88^{+0.04}_{-0.06}$
        & $0.74^{+0.08}_{-0.12}$ & $0.70^{+0.10}_{-0.14}$
        & $7.77^{+0.26}_{-0.30}$ & $0.02^{+0.09}_{-0.02}$ \\
        
        & 5
        & $0.76^{+0.07}_{-0.10}$ & $0.78^{+0.07}_{-0.10}$
        & $0.87^{+0.04}_{-0.06}$ & $0.86^{+0.05}_{-0.07}$
        & $0.75^{+0.08}_{-0.10}$ & $0.74^{+0.09}_{-0.13}$
        & $9.08^{+0.31}_{-0.36}$ & $0.02^{+0.07}_{-0.02}$ \\

        \bottomrule
    \end{tabular}
\end{table*}

\begin{table*}
    \centering
    \caption{Correlation coefficients between $S^3$ and simulation metrics, and their evaluation time on randomly sampled parameters. LSE temperatures are set to $\tau_1=\tau_2=1.0$.}
    \label{tab:correlation3}

    \setlength{\tabcolsep}{5pt}
    \renewcommand{\arraystretch}{1.00}
    \small

    \begin{tabular}{
        c
        c
        cc
        cc
        cc
        cc
    }
        \toprule
        & &
        \multicolumn{2}{c}{$\sigma^\leftarrow_H$} &
        \multicolumn{2}{c}{$\overline{\delta}_H$} &
        \multicolumn{2}{c}{$\delta^{0.50}_H$} &
        \multicolumn{2}{c}{Time per sample (s)} \\

        \cmidrule(lr){3-4}
        \cmidrule(lr){5-6}
        \cmidrule(lr){7-8}
        \cmidrule(lr){9-10}

        & $H$
        & $r_p$
        & $r_s$
        & $r_p$
        & $r_s$
        & $r_p$
        & $r_s$
        & $T$
        & $\hat{T}$ \\
        \midrule

        \multirow{5}{*}{\rotatebox[origin=c]{90}{Spiral}}
        & 1
        & $0.70^{+0.09}_{-0.12}$ & $0.67^{+0.10}_{-0.13}$
        & $0.48^{+0.12}_{-0.15}$ & $0.47^{+0.14}_{-0.16}$
        & $-$ & $-$
        & $2.91^{+0.05}_{-0.04}$ & $0.01^{+0.04}_{-0.01}$ \\
        
        & 2
        & $0.66^{+0.10}_{-0.13}$ & $0.61^{+0.13}_{-0.17}$
        & $0.81^{+0.06}_{-0.09}$ & $0.78^{+0.07}_{-0.10}$
        & $0.73^{+0.08}_{-0.11}$ & $0.71^{+0.09}_{-0.12}$
        & $3.06^{+0.04}_{-0.04}$ & $0.01^{+0.03}_{-0.01}$ \\
        
        & 3
        & $0.65^{+0.11}_{-0.14}$ & $0.63^{+0.12}_{-0.16}$
        & $0.78^{+0.07}_{-0.10}$ & $0.76^{+0.08}_{-0.11}$
        & $0.73^{+0.08}_{-0.11}$ & $0.72^{+0.09}_{-0.12}$
        & $3.54^{+0.06}_{-0.05}$ & $0.01^{+0.04}_{-0.01}$ \\
        
        & 4
        & $0.56^{+0.13}_{-0.19}$ & $0.59^{+0.13}_{-0.18}$
        & $0.74^{+0.07}_{-0.10}$ & $0.73^{+0.08}_{-0.12}$
        & $0.72^{+0.08}_{-0.11}$ & $0.72^{+0.09}_{-0.12}$
        & $4.04^{+0.12}_{-0.06}$ & $0.01^{+0.03}_{-0.01}$ \\
        
        & 5
        & $0.47^{+0.15}_{-0.19}$ & $0.50^{+0.14}_{-0.19}$
        & $0.73^{+0.07}_{-0.10}$ & $0.72^{+0.09}_{-0.12}$
        & $0.72^{+0.08}_{-0.12}$ & $0.72^{+0.09}_{-0.13}$
        & $4.23^{+0.10}_{-0.07}$ & $0.01^{+0.03}_{-0.01}$ \\

        \midrule

        \multirow{5}{*}{\rotatebox[origin=c]{90}{Unicycle}}
        & 1
        & $0.83^{+0.04}_{-0.07}$ & $0.86^{+0.05}_{-0.06}$
        & $-0.01^{+0.19}_{-0.17}$ & $0.00^{+0.19}_{-0.19}$
        & $-0.11^{+0.15}_{-0.15}$ & $-0.12^{+0.19}_{-0.18}$
        & $23.27^{+0.57}_{-0.84}$ & $0.01^{+0.05}_{-0.01}$ \\
        
        & 2
        & $0.75^{+0.08}_{-0.11}$ & $0.71^{+0.10}_{-0.14}$
        & $0.70^{+0.09}_{-0.11}$ & $0.69^{+0.10}_{-0.12}$
        & $0.28^{+0.15}_{-0.17}$ & $0.29^{+0.16}_{-0.18}$
        & $27.19^{+0.66}_{-1.00}$ & $0.01^{+0.06}_{-0.01}$ \\
        
        & 3
        & $0.54^{+0.12}_{-0.15}$ & $0.56^{+0.13}_{-0.16}$
        & $0.59^{+0.11}_{-0.14}$ & $0.60^{+0.12}_{-0.16}$
        & $0.51^{+0.13}_{-0.16}$ & $0.49^{+0.14}_{-0.17}$
        & $36.80^{+0.96}_{-1.29}$ & $0.01^{+0.05}_{-0.01}$ \\
        
        & 4
        & $0.46^{+0.12}_{-0.15}$ & $0.45^{+0.14}_{-0.17}$
        & $0.57^{+0.11}_{-0.13}$ & $0.57^{+0.12}_{-0.16}$
        & $0.59^{+0.10}_{-0.12}$ & $0.60^{+0.11}_{-0.15}$
        & $58.44^{+2.00}_{-2.19}$ & $0.01^{+0.06}_{-0.01}$ \\
        
        & 5
        & $0.48^{+0.12}_{-0.15}$ & $0.49^{+0.14}_{-0.17}$
        & $0.56^{+0.10}_{-0.13}$ & $0.58^{+0.12}_{-0.16}$
        & $0.62^{+0.10}_{-0.13}$ & $0.61^{+0.12}_{-0.15}$
        & $105.40^{+4.80}_{-4.58}$ & $0.01^{+0.06}_{-0.01}$ \\

        \midrule

        \multirow{5}{*}{\rotatebox[origin=c]{90}{Mountain car}}
        & 1
        & $0.75^{+0.08}_{-0.10}$ & $0.75^{+0.08}_{-0.12}$
        & $-0.06^{+0.19}_{-0.19}$ & $-0.05^{+0.20}_{-0.20}$
        & $-0.47^{+0.20}_{-0.16}$ & $-0.41^{+0.19}_{-0.17}$
        & $3.43^{+0.18}_{-0.18}$ & $0.02^{+0.10}_{-0.02}$ \\

        & 2
        & $0.73^{+0.09}_{-0.11}$ & $0.74^{+0.09}_{-0.13}$
        & $0.86^{+0.05}_{-0.07}$ & $0.84^{+0.06}_{-0.09}$
        & $0.33^{+0.19}_{-0.22}$ & $0.29^{+0.19}_{-0.22}$
        & $4.73^{+0.22}_{-0.23}$ & $0.02^{+0.11}_{-0.02}$ \\

        & 3
        & $0.74^{+0.07}_{-0.10}$ & $0.75^{+0.08}_{-0.12}$
        & $0.90^{+0.04}_{-0.06}$ & $0.88^{+0.05}_{-0.06}$
        & $0.66^{+0.10}_{-0.14}$ & $0.62^{+0.12}_{-0.16}$
        & $6.15^{+0.23}_{-0.25}$ & $0.03^{+0.13}_{-0.03}$ \\

        & 4
        & $0.76^{+0.06}_{-0.09}$ & $0.75^{+0.08}_{-0.12}$
        & $0.89^{+0.04}_{-0.05}$ & $0.88^{+0.04}_{-0.06}$
        & $0.75^{+0.08}_{-0.11}$ & $0.72^{+0.09}_{-0.14}$
        & $7.47^{+0.27}_{-0.28}$ & $0.03^{+0.12}_{-0.03}$ \\

        & 5
        & $0.74^{+0.08}_{-0.10}$ & $0.76^{+0.08}_{-0.10}$
        & $0.88^{+0.04}_{-0.05}$ & $0.87^{+0.05}_{-0.07}$
        & $0.77^{+0.07}_{-0.10}$ & $0.75^{+0.08}_{-0.12}$
        & $8.90^{+0.30}_{-0.31}$ & $0.02^{+0.12}_{-0.02}$ \\

        \bottomrule
    \end{tabular}
\end{table*}

\subsubsection{Hyperparameters.}

We set each component of $\tilde{e}$ to one half of the corresponding average cell width. This is a practical compromise between the sufficient containment condition in Proposition~\ref{prop:etilde} and the risk of under-approximating the concretized abstract reachable sets. Enforcing the certified condition $\tilde{e}_i \geq \max_j \eta_{i,j}$ is also inconvenient during optimization because the cell widths $\eta_{i,j}$ depend on the quantization parameters $\theta$. Consequently, defining $\tilde{e}$ using the maximum cell width --- or using a width quantile as a less conservative heuristic --- would make $\tilde{e}$ parameter-dependent and introduce additional nonsmooth operations into the objective. (Note that such heuristics do not make the abstraction unsound.)

By contrast, the average cell width in dimension $i$ is determined solely by the domain extent and grid resolution,
\begin{equation}
    \overline{\eta}_i = \frac{\overline{x}_i-\underline{x}_i}{m_i},
\end{equation}
and is therefore independent of $\theta$. This yields a fixed inflation vector that can be selected before optimization and differentiated through without additional dependence on the learned quantization parameters.

We held the grid resolution fixed throughout the hyperparameter experiments. For Spiral and Mountain Car, we used $m_i=50$ in each dimension, whereas for Unicycle we used $m_i=20$. These resolutions were selected primarily to control the computational cost of evaluating the finite-horizon simulation metric across the number of parameter samples required to obtain sufficiently tight bootstrap confidence intervals (see evaluation times in Tables~\ref{tab:correlation1}, \ref{tab:correlation2}, and \ref{tab:correlation3}).

\subsection{Optimization performance}

\begin{figure*}[t]
    \centering
    \includegraphics[width=0.9\linewidth]{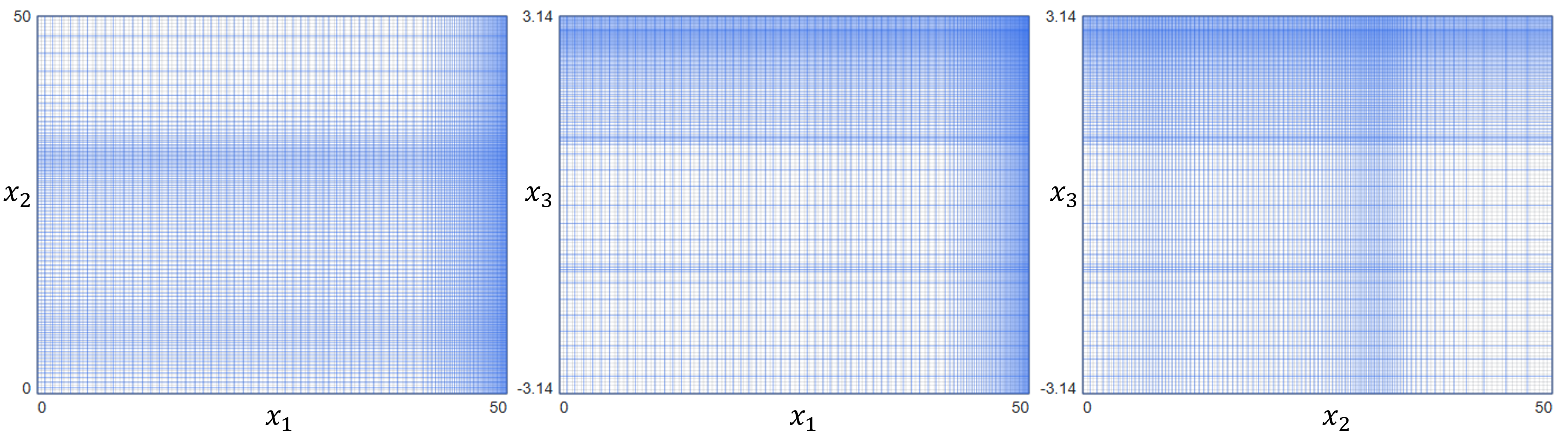}
    \caption{All orthogonal projections of the final grid after optimization of Unicycle ($m_i=100$, $H=5, \tau_1=\tau_2=0.1$).}
\label{fig:uni-grid}
\end{figure*}

\begin{figure}
    \centering
    \includegraphics[width=0.8\linewidth]{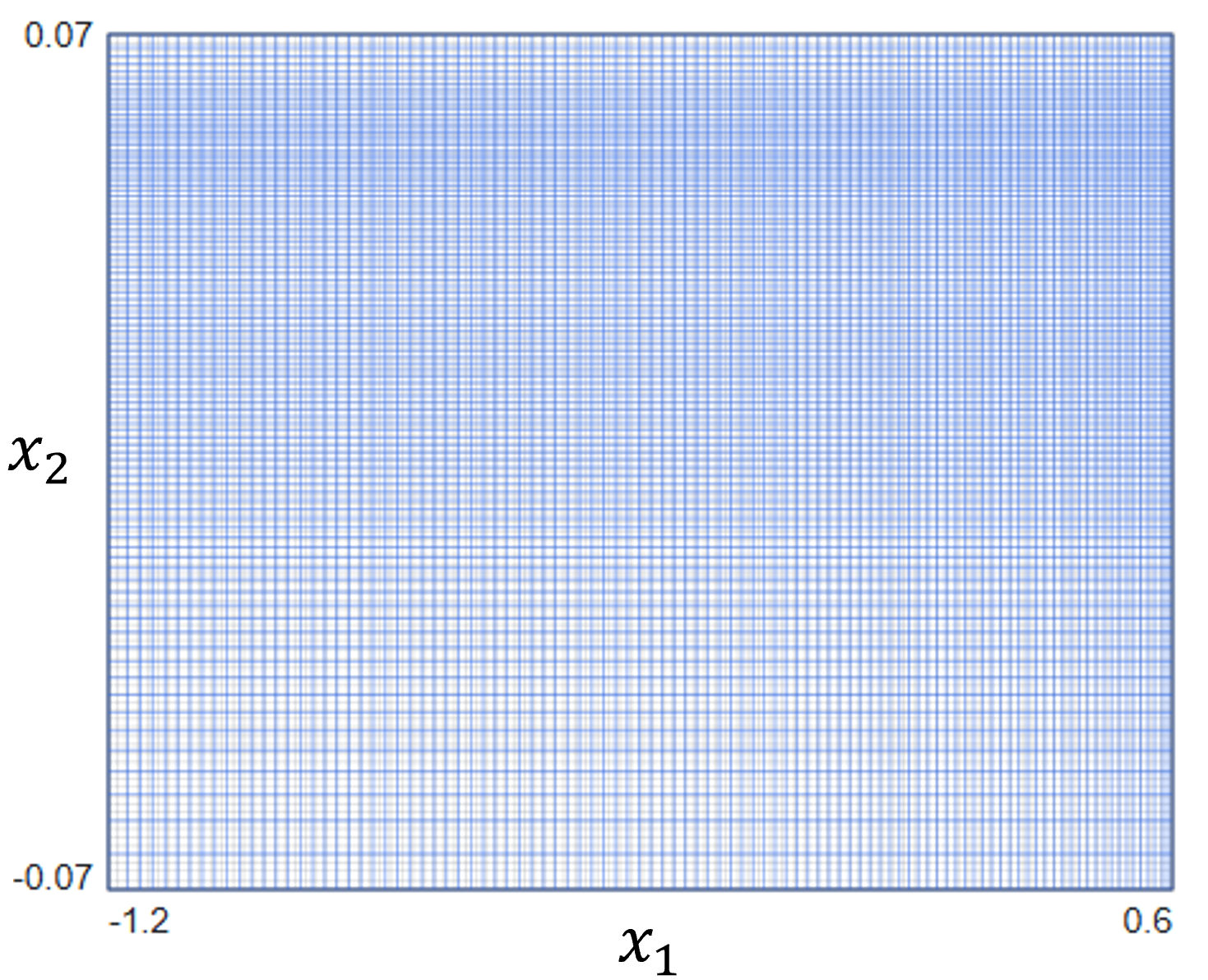}
    \caption{Final grid after optimization of mountain car ($m_i=100$, $H=1, \tau=0.1$).}
\label{fig:mc-grid}
\end{figure}

\paragraph{Effect of alternative $S^3$ hyperparameters on grid optimization.} Table~\ref{tab:more-optiimzation} shows that the effect of the $S^3$ hyperparameters is relatively system-dependent. Spiral (the simplest system) is largely insensitive to temperature because the solution --- which is the uniform grid --- is easy to find. For a fixed horizon, $\tau=0.1$ and $\tau=1.0$ produce nearly identical changes in the simulation metrics and verification recall. Mountain Car exhibits the most sensitivity. At $H=1$, optimization generally produces modest improvements in the worst-case, mean, and median discrepancies. At longer horizons, particularly for $\tau=1.0$, optimization tends to produce larger reductions in the median and, in several settings, the mean discrepancy, but these improvements are accompanied by degradation in the worst-case metric $\sigma^\leftarrow_H$ and reduced verification recall. Unicycle exhibits more variable behavior. At resolution $50^3$, intermediate and long horizons can substantially reduce the worst-case metric, with the largest reduction obtained at $H=5$ and $\tau=0.1$. Note that the simulation metrics in Table~\ref{tab:more-optiimzation} should not be compared row-wise over $H\in\{1, 3, 5\}$ within a case study and resolution because the simulation metric grows monotonically in $H$. Rather, they should be compared among $\tau \in \{0.1, 1.0\}$.

\paragraph{Agreement with the proxy-validation experiments.} These optimization results qualitatively agree with the proxy-validation experiments reported through Tables~\ref{tab:correlation1}, \ref{tab:correlation2}, and \ref{tab:correlation3}. The correlation analysis showed that increasing the LSE temperature increases the association of $S^3$ with abstraction-wide statistics while reducing its association with the worst-case metric, and that correlation with $\sigma^\leftarrow_H$ generally weakens as the horizon increases. The Mountain Car results provide the strongest confirmation of this trade-off. At $H\in\{3,5\}$, $\tau=1.0$ generally yields greater median reductions than $\tau=0.1$, but also produces a substantially larger increase in $\sigma^\leftarrow_H$ and a smaller, or negative, change in recall. Similarly, for Unicycle at resolution $50^3$ and $H=5$, the lower-temperature objective produces a considerably larger reduction in the worst-case metric than the higher-temperature objective. Spiral shows little separation between temperatures, suggesting that its optimization landscape is insensitive to this trade-off.

\paragraph{Optimizer settings.} For Spiral and Mountain Car, the optimizer settings were held fixed across all evaluated horizons and temperatures. Spiral was optimized for $1000$ gradient steps using a learning rate of $200$. Mountain Car was optimized for $500$ steps using a learning rate of $320$. Unicycle required case-by-case tuning. For models at resolution $50^3$, all configurations were optimized for $800$ steps using a learning rate of $0.005$. At resolution $100^3$ and $H\in{1,3}$ configurations, $800$ steps and a learning rate of $10.0$ were used. For the $H=5$ configurations, $800$ steps and a learning rate of $5\times10^{-5}$.

\paragraph{Grid visualizations.} Figures~\ref{fig:uni-grid} and \ref{fig:mc-grid} show final, optimized grids of Unicycle and Mountain Car, respectively. Spiral was omitted, as the optimal grid is simply uniform. We observe the most substantial variation in gap widths in the Unicycle case. Here, the optimizer learns to allocate more granularity to higher turn rates ($x_3$), which occur at the boundaries of the obstacle or goal. While the variation in gap width is more mild in the Mountain Car case, we observe that more granularity is allocated to higher velocities (where the reachable set expands rapidly).

\begin{table*}
    \centering
    \caption{$S^3$ + SGD optimization results for multiple horizons $H$ and temperatures $\tau_1=\tau_2=\tau$.}
    \setlength{\tabcolsep}{4.50pt}
    \renewcommand{\arraystretch}{0.8}
    \footnotesize

    \begin{tabular}{
        c
        c
        c
        cc
        cc
        cc
        cc
    }
        \toprule
        & & &
        \multicolumn{6}{c}{Sim. Metrics} &
        \multicolumn{2}{c}{Task} \\

        \cmidrule(lr){4-9}
        \cmidrule(lr){10-11}

        & $m_i$
        & $(H,\tau)$
        & $\sigma^\leftarrow_H \downarrow$
        & $\Delta \sigma^\leftarrow_H \downarrow$
        & $\overline{\delta}_H \downarrow$
        & $\Delta \overline{\delta}_H \downarrow$
        & $\delta_H^{0.5} \downarrow$
        & $\Delta \delta_H^{0.5} \downarrow$
        & Recall $\uparrow$
        & $\Delta \text{Recall} \uparrow$ \\
        \midrule

        \multirow[c]{12}{*}{\rotatebox[origin=c]{90}{Spiral}}
        & \multirow[c]{6}{*}{70}
        & $(1,0.1)$ & $0.14$ & $\shortneg 0.021$ & $0.027$ & $\shortneg 0.00067$ & $0.0$ & $0.0$ & $0.87$ & $0.011$ \\
        && $(1,1.0)$ & $0.14$ & $\shortneg 0.020$ & $0.027$ & $\shortneg 0.00067$ & $0.0$ & $0.0$ & $0.87$ & $0.011$ \\
        && $(3,0.1)$ & $2.1$ & $\shortneg 0.024$ & $0.37$ & $\shortneg 0.0045$ & $0.43$ & $0.013$ & $0.87$ & $0.011$ \\
        && $(3,1.0)$ & $2.1$ & $\shortneg 0.023$ & $0.37$ & $\shortneg 0.0043$ & $0.43$ & $0.013$ & $0.87$ & $0.011$ \\
        && $(5,0.1)$ & $3.9$ & $\shortneg 0.029$ & $0.67$ & $\shortneg 0.0065$ & $0.71$ & $0.012$ & $0.87$ & $0.011$ \\
        && $(5,1.0)$ & $3.9$ & $\shortneg 0.029$ & $0.67$ & $\shortneg 0.0065$ & $0.71$ & $0.012$ & $0.87$ & $0.011$ \\

        \cmidrule(lr){2-11}

        & \multirow[c]{6}{*}{100}
        & $(1,0.1)$ & $0.10$ & $\shortneg 0.019$ & $0.019$ & $\shortneg 0.00040$ & $0.0$ & $0.0$ & $0.92$ & $0.0058$ \\
        && $(1,1.0)$ & $0.10$ & $\shortneg 0.019$ & $0.019$ & $\shortneg 0.00034$ & $0.0$ & $0.0$ & $0.93$ & $0.0078$ \\
        && $(3,0.1)$ & $2.1$ & $\shortneg 0.016$ & $0.27$ & $\shortneg 0.0019$ & $0.30$ & $0.0061$ & $0.93$ & $0.014$ \\
        && $(3,1.0)$ & $2.1$ & $\shortneg 0.016$ & $0.27$ & $\shortneg 0.0016$ & $0.30$ & $0.0061$ & $0.93$ & $0.0082$ \\
        && $(5,0.1)$ & $4.0$ & $\shortneg 0.0074$ & $0.50$ & $\shortneg 0.0033$ & $0.50$ & $0.0019$ & $0.93$ & $0.012$ \\
        && $(5,1.0)$ & $4.0$ & $\shortneg 0.0074$ & $0.50$ & $\shortneg 0.0045$ & $0.50$ & $0.0019$ & $0.93$ & $0.0086$ \\

        \midrule

        \multirow[c]{12}{*}{\rotatebox[origin=c]{90}{Unicycle}}
        & \multirow[c]{6}{*}{50}
        & $(1,0.1)$ & $3.0$ & $0.0048$ & $0.24$ & $\shortneg 0.00076$ & $0.090$ & $\shortneg 0.00026$ & $0.24$ & $\shortneg 0.0053$ \\
        && $(1,1.0)$ & $3.0$ & $0.0014$ & $0.24$ & $\shortneg 3.7\text{e} \shortneg5$ & $0.091$ & $\shortneg 5.1\text{e} \shortneg5$ & $0.24$ & $\shortneg 3.5\text{e} \shortneg5$ \\
        && $(3,0.1)$ & $9.9$ & $\shortneg 0.15$ & $1.9$ & $0.0023$ & $2.0$ & $0.00038$ & $0.24$ & $\shortneg 0.00092$ \\
        && $(3,1.0)$ & $9.8$ & $\shortneg 0.16$ & $1.9$ & $\shortneg 0.00076$ & $2.0$ & $\shortneg 0.00087$ & $0.24$ & $\shortneg 0.0042$ \\
        && $(5,0.1)$ & $19$ & $\shortneg 1.6$ & $4.2$ & $0.0085$ & $4.0$ & $0.0068$ & $0.24$ & $\shortneg 0.0048$ \\
        && $(5,1.0)$ & $20$ & $\shortneg 0.37$ & $4.2$ & $0.025$ & $4.0$ & $0.017$ & $0.22$ & $\shortneg 0.027$ \\

        \cmidrule(lr){2-11}

        & \multirow[c]{6}{*}{100}
        & $(1,0.1)$ & $2.8$ & $1.4$ & $0.12$ & $0.0034$ & $0.038$ & $\shortneg 0.0022$ & $0.70$ & $\shortneg 0.098$ \\
        && $(1,1.0)$ & $3.1$ & $1.6$ & $0.12$ & $0.0028$ & $0.037$ & $\shortneg 0.0028$ & $0.67$ & $\shortneg 0.13$ \\
        && $(3,0.1)$ & $9.5$ & $1.6$ & $0.97$ & $0.041$ & $0.86$ & $\shortneg 0.12$ & $0.75$ & $\shortneg 0.051$ \\
        && $(3,1.0)$ & $9.9$ & $2.0$ & $0.93$ & $0.00096$ & $0.98$ & $\shortneg 0.00079$ & $0.79$ & $\shortneg 0.015$ \\
        && $(5,0.1)$ & $18$ & $0.00045$ & $2.1$ & $\shortneg 3.8\text{e-}5$ & $1.9$ & $0.00015$ & $0.80$ & $0.00024$ \\
        && $(5,1.0)$ & $18$ & $\shortneg 5.8\text{e-}5$ & $2.1$ & $\shortneg 0.00021$ & $1.9$ & $\shortneg 0.00019$ & $0.80$ & $6.8\text{e-}6$ \\

        \midrule

        \multirow[c]{12}{*}{\rotatebox[origin=c]{90}{Mountain Car}}
        & \multirow[c]{6}{*}{70}
        &  $(1,0.1)$ & $0.065$ & $\shortneg 0.00090$ & $0.0019$ & $\shortneg0.00046$ & $0.00038$ & $\shortneg 8.6 \text{e} \shortneg5$ & $0.21$ & $0.082$ \\
        && $(1,1.0)$ & $0.061$ & $\shortneg 0.0042$ & $0.0019$ & $\shortneg0.00042$ & $0.00034$ & $\shortneg0.00013$ & $0.26$ & $0.14$ \\
        && $(3,0.1)$ & $0.085$ & $0.013$ & $0.031$ & $\shortneg0.0034$ & $0.029$ &$ \shortneg0.00052$ & $0.20$ & $0.072$ \\
        && $(3,1.0)$ & $0.16$ & $0.092$ & $0.033$ & $\shortneg 0.0011$ & $0.015$ & $\shortneg0.014$ & $0.16$ & $0.033$ \\
        && $(5,0.1)$ & $0.18$ & $0.033$ & $0.060$ & $\shortneg 0.0090$ & $0.056$ & $\shortneg0.010$ & $0.21$ & $0.089$ \\
        && $(5,1.0)$ & $0.33$ & $0.19$ & $0.069$ & $\shortneg 3.2 \text{e} \shortneg 5$ & $0.011$ & $\shortneg0.055$ & $0.14$ & $0.018$ \\

        \cmidrule(lr){2-11}

        & \multirow[c]{6}{*}{100}
        & $(1,0.1)$ & $0.064$ & $\shortneg 0.0012$ & $0.0013$ & $\shortneg 0.00036$ & $0.00026$ & $\shortneg 3.9 \text{e} \shortneg 5$ & $0.35$ & $0.083$ \\
        && $(1,1.0)$ & $0.064$ & $\shortneg 0.0015$ & $0.0013$ & $\shortneg 0.00033$ & $0.00026$ & $\shortneg 4.0\text{e} \shortneg5$ & $0.35$ & $0.081$ \\
        && $(3,0.1)$ & $0.082$ & $0.015$ & $0.025$ & $0.00086$ & $0.024$ & $0.0031$ & $0.26$ & $\shortneg 0.0045$ \\
        && $(3,1.0)$ & $0.15$ & $0.087$ & $0.021$ & $\shortneg 0.0032$ & $0.0030$ & $\shortneg 0.018$ & $0.23$ & $\shortneg 0.033$ \\
        && $(5,0.1)$ & $0.18$ & $0.067$ & $0.045$ & $\shortneg 0.0034$ & $0.047$ & $0.00038$ & $0.27$ & $0.0035$ \\
        && $(5,1.0)$ & $0.26$ & $0.14$ & $0.041$ & $\shortneg 0.0067$ & $0.0031$ & $\shortneg 0.044$ & $0.22$ & $\shortneg 0.047$ \\

        \bottomrule
    \end{tabular}
    \label{tab:more-optiimzation}
\end{table*}

\end{document}